\newcommand{\pgftextcircled}[1]{
	\setbox0=\hbox{#1}%
	\dimen0\wd0%
	\divide\dimen0 by 2%
	\begin{tikzpicture}[baseline=(a.base)]%
	\useasboundingbox (-\the\dimen0,0pt) rectangle (\the\dimen0,1pt);
	\node[circle,draw,outer sep=0pt,inner sep=0.1ex] (a) {#1};
	\end{tikzpicture}
}
\let\textcircled=\pgftextcircled
\pgfplotsset{width=8.5cm,compat=1.9}
\definecolor{bblue}{HTML}{4F81BD}
\definecolor{rred}{HTML}{C0504D}
\definecolor{ggreen}{HTML}{9BBB59}
\definecolor{ppurple}{HTML}{9F4C7C}
\definecolor{barcolor1}{HTML}{A6CEE3}
\definecolor{barcolor2}{HTML}{1F78B4}
\definecolor{barcolor3}{HTML}{9BBB59}
\newcommand{\fooArg}{} 
\newcommand{\foooArg}{} 
\newenvironment{scheme}[2]
{
\renewcommand{\fooArg}{#1}
\renewcommand{\foooArg}{#2}
\begin{Scheme}[t]
\hspace{-0.2cm}
\small
}{
\caption{\fooArg}
\label{\foooArg}
\end{Scheme}
}
\newcommand{\ourname}{\textit{simTPM}\xspace}
\newcommand{\ournameplain}{simTPM\xspace}
\newcommand{\tpm}{TPM\xspace}
\newcommand{\stpm}{simTPM\xspace}
\newcommand{\ftpm}{fTPM\xspace}
\newcommand{\atpm}{piTPM\xspace}
\newcommand{\etpm}{embTPM\xspace}
\newcommand{\boardid}{board\_id\xspace}
\newcommand{\cvtpm}{SCoP-vTPM\xspace}
\newcommand{\hvtpm}{SW-only-vTPM\xspace}
\newcommand{\cros}{\textbf{\textcolor{red}{\ding{56}}}\xspace}
\newcommand{\scros}{\textbf{\textcolor{red}{\ding{56}}}\xspace}
\newcommand{\tick}{\textbf{\textcolor{blue}{\ding{52}}}\xspace}
\newcommand{\1}{\textbf{\textcolor{black}{\ding{63}}}\xspace}
\newcommand{\no}{\textbf{\textcolor{red}{\ding{56}}}\xspace}
\newcommand{\kinda}{\textbf{\textcolor{orange}{\ding{74}}}\xspace}
\newcommand{\na}{\textbf{\textcolor{black}{\ding{82}}}\xspace}
\newcommand{\yes}{\textbf{\textcolor{blue}{\ding{52}\xspace}}\xspace}
\definecolor{Gray}{gray}{0.90}
\newcolumntype{a}{>{\columncolor{Gray}}c}
\newcommand{\circledb}[1]{
	\tikz[baseline=(myanchor.base)]
	\node[circle,fill=black,inner sep=1pt] (myanchor) {
		\color{-.}
		\bfseries
		\sffamily
		\footnotesize #1
	};
}
\definecolor{lightgray}{gray}{0.93}
\DeclareTextFontCommand{\red}{\color{red}}
\newlength\bubblesize
\definecolor{KWColor}{rgb}{0.37,0.08,0.25}
\definecolor{CommentColor}{rgb}{0.133,0.545,0.133}
\definecolor{StringColor}{rgb}{0,0.126,0.941}
\definecolor{Gray}{gray}{0.9}
\definecolor{lightgray}{gray}{0.93}
\bfseries\color{KWColor},
\newcommand{\rexec}{\leftarrow^{\hspace{-0.53em}\scalebox{0.5}{\$}}\hspace{0.2em}}
\newcommand{\GG}{\mathbb{G}}
\newcommand{\ZZ}{\mathbb{Z}}
\newcommand{\Enc}{\mathsf{Enc}}
\newcommand{\pp}{{\sf pp}}
\newcommand{\gpk}{{\sf gpk}}
\newcommand{\gsk}{{\sf gsk}}
\newcommand{\nym}{{\sf nym}}
\newcommand{\hash}{\mathsf{H}}
\newcommand{\Adv}{\mathsf{Adv}}
\newcommand{\bsn}{\mathsf{bsn}}
\newcommand{\A}{\mathcal{A}}
\newcommand{\exec}{\ensuremath{\leftarrow}}
\newcommand{\sk}{{\sf sk}}
\newcommand{\iss}{{\sf iss}}
\newcommand{\NP}{{\sf NP}}
\newcommand{\rand}{\rexec}
\newcommand{\crs}{{\sf crs}}
\newcommand{\extract}{{\sf Extract}}
\newcommand{\accept}{{\sf accept}}
\newcommand{\reject}{{\sf reject}}
\newcommand{\enc}{\Enc}
\newcommand{\keygen}{{\sf KeyGen}}
\newcommand{\E}{{\mathcal E}}
\newcommand{\skE}{{\sf sk_{\E}}}
\newcommand{\pkE}{{\sf pk_{\E}}}
\newcommand{\dec}{{\sf Dec}}
\newcommand{\add}{{\sf Add}}
\newcommand{\st}{{\sf st}}
\newcommand{\mulC}{{\sf MulC}}
\newcommand{\M}{{\mathcal M}}
\newcommand{\C}{{\mathcal C}}
\newcommand{\indcpa}{{\sf IND\text{-}CPA}}
\newcommand{\setupSoK}{{\sf Setup_{SoK}}}
\newcommand{\signSoK}{{\sf Sign_{SoK}}}
\newcommand{\verifySoK}{{\sf Verify_{SoK}}}
\newcommand{\simsetupSoK}{{\sf SimSetup_{SoK}}}
\newcommand{\simulSoK}{{\sf Sim_{SoK}}}
\newcommand{\simsignSoK}{{\sf SimSign_{SoK}}}
\newcommand{\extractSoK}{{\sf Extract_{SoK}}}
\newtheorem{definition}{Definition}
\newtheorem{theorem}{Theorem}
\newtheorem{proof}{Proof}
\newtheorem{remark}{Remark}
\begin{document}

\date{}

\title{\Large \bf simTPM: User-centric TPM for Mobile Devices \\ (Technical Report)}

\author{
{\rm Dhiman Chakraborty}\\
CISPA Helmholtz Center \\ for Information Security, \\ Saarland University
\and
{\rm Lucjan Hanzlik}\\
CISPA Helmholtz Center \\ for Information Security, \\ Stanford University
\and
{\rm Sven Bugiel}\\
CISPA Helmholtz Center \\ for Information Security
} 

\maketitle

\thispagestyle{empty}

\subsection*{Abstract}

Trusted Platform Modules are valuable building blocks for security solutions and have also been recognized as beneficial for security on mobile platforms, like smartphones and tablets. However, strict space, cost, and power constraints of mobile devices prohibit an implementation as dedicated on-board chip and the incumbent implementations are software TPMs protected by Trusted Execution Environments.

In this paper, we present \ournameplain, an alternative implementation of a mobile TPM based on the SIM card available in mobile platforms. We solve the technical challenge of implementing a TPM2.0 in the resource-constrained SIM card environment and integrate our \stpm into the secure boot chain of the ARM Trusted Firmware on a HiKey960 reference board. Most notably, we address the challenge of how a removable TPM can be bound to the host device's root of trust for measurement. As such, our solution not only provides a mobile TPM that avoids additional hardware while using a dedicated, strongly protected environment, but also offers promising synergies with co-existing TEE-based TPMs. In particular, \ournameplain offers a user-centric trusted module. Using performance benchmarks, we show that our \stpm has competitive speed with a reported TEE-based TPM and a hardware-based \tpm.

\section{Introduction}

Trusted computing technology has become a valuable building block for security solutions.
The most widely deployed form of trusted computing on end-consumer devices is the Trusted Platform Module~(TPM), a dedicated hardware chip that offers facilities for crypto co-processing, protected credentials, secure storage, or even the attestation of its host platform's state. By today, software and system vendors have built various security solutions on top of TPM. For instance, Microsoft's BitLocker uses it to release disk-encryption credentials only to a trustworthy bootloader~\cite{windows_tpm}; or Google's Chromium uses the TPM for a range of objectives~\cite{tpm_chromium}, such as preventing software version rollback, protecting RSA keys, or attesting protected keys.

TPM is also of interest for the different stakeholders on mobile devices. However, the particular benefits that the TPM offers have historically hung on the TPM's implementation as a dedicated security chip that can act as a "local trusted third party" on devices. Mobile devices are, however, constrained in space, cost, and power consumption, which prohibits a classical deployment of TPM. To address the particular problems of the mobile domain, 
the Trusted Computing Group~(TCG) introduced the Mobile Trusted Module~(MTM) specifications~\cite{TCGMTMSpe}. Although the MTM concept has never left the prototype status, its ideas influenced the latest TPM2.0 specification~\cite{TPM2Spec}. The TPM2.0 mobile reference architecture~\cite{TCGMTMSpec2} proposed different alternatives for implementing a TPM on a mobile device, including virtualization, dedicated cores, or hardware-based isolation. The de-facto implementation of mobile TPMs today are protected environments through hardware-based trusted execution environment~(TEE)~\cite{RajSWACEFKLMNRS16,McGillionDNA15,JangKKKK15,EkbergKA14,Ekberg13,McCunePPRI08}, like ARM TrustZone that is available on virtually all mobile platforms today, where the TPM is implemented as protected software application inside the TEE.

Given the different proposals for realizing TPMs on mobile platforms, we conduct a systematic comparison of the different  solutions in terms of security of the TPM itself, their applicability in current systems, and deploy-ability in the specific setting of mobile devices. While the solutions naturally differ in their security guarantees for the TPM (i.e., TPM state or execution) due to differences in the underlying technology (e.g., dedicated hardware chip vs.~virtual machine), we see particularly shortcomings of the current solutions in terms of applicability and deploy-ability. In particular, the currently incumbent \ftpm (firmware TPM) is strictly bound to the platform vendors and serves their purposes (e.g., securing vendor credentials), but is not or only very limited available to other stakeholders in the system, such as the user. Moreover, an \ftpm~\cite{RajSWACEFKLMNRS16} that is based on a TEE falls short on providing a fully measured boot by itself. The availability of an \ftpm depends on the availability of the TEE during boot, which is one of the last steps in the long boot-chain. In light of recent attacks against mobile bootloaders~\cite{bootstomp} and trusted software in TEE~\cite{MachiryGSSSWBCK17,TZDowngrade,ExploitingTZTEE,bitsPlease,BHTZ,BHRose,nailgun:sp19,197209,203864}, this lacking support to attest the entire, early boot-chain, including the software in the TEE, is unsatisfactory.

To put a new perspective on solving those issues of \ftpm,
we add in this paper an alternative implementation of a hardware TPM called \ourname to the landscape of mobile TPM implementations by using the subscriber identity module~(SIM) card. We have implemented a prototype of our solution on a Hikey960 reference board~\cite{hikey960} and using a Gemalto Multos card as SIM card.
Our \ournameplain solves the technical challenge of implementing TPM2.0 compliant functionality on the SIM card, which does not require any additional hardware for the TPM. This approach keeps the costs down and
leverages dormant hardware capabilities of mobile devices. Through performance tests, we show that \ournameplain is competitively fast to reported \ftpm implementations. A particular challenge of this design is the lack of the usual physical binding between the TPM and its host platform's root of trust for measurement~(RTM), that is, a SIM card can be moved to another platform. We discuss two strategies in the particular setting of mobile devices on how to bind the \ournameplain to a device's RTM, either through an extended secure boot and TEE proxy or through a distance bounding protocol. Once bound to the device's RTM, we also integrated \ournameplain with the ARM Trusted Firmware~(ATF) boot chain to augment the ATF secure boot with an authenticated boot. Our solution not only fills the gap of TEE-based TPMs for measured boots, but the co-existence of a \ftpm and \stpm on a mobile device creates also promising synergies between the two TPMs (e.g., to support multiple stakeholders).
Our contribution can be summarized as follows:
\begin{enumerate}[topsep=0pt,parsep=0pt,partopsep=0pt]
	\item A systematic comparison of existing solutions for mobile TPMs and their enabling technologies. We discover that incumbent solutions fall short on applicability and deploy-ability aspects.

	\item We implemented the first SIM card based TPM2.0 for mobile devices by developing a \stpm, which can be executed in this constrained environment. Our solution enables a user-centric trusted module offering a portable sealed storage.

	\item We propose an integration with the on-board TEE to solve the problem of binding the \stpm to the RTM and discuss an alternative solution based on distance bounding. As a result of this binding, a fully measured boot on the ARM Trusted Firmware~(ATF) secure boot chain is possible.

	\item The performance of our \stpm is competitively fast to a reported \ftpm implementation and is comparable with existing hardware TPMs.
\end{enumerate}

\section{Background}
\label{sec:background}

We briefly introduce necessary background information about ARM Trusted Firmware, TPM, and SIM cards. 

\subsection{ARM Trusted Firmware (ATF)}
\label{background:atf}
ATF implements a subset of the trusted board boot requirements for ARM reference platform~\cite{TBBR}. Figure~\ref{fig:secureBootProcess} illustrates the bootloader settings and boot chain. ATF is triggered when the platform is powered on. After the primary CPU and all other CPU cores are initialized successfully, the primary core triggers the ATF (\textcircled{P1}). ATF is divided in five steps depending on modularity: \circledb{1}~BootLoader stage~1~(BL1) for AP trusted boot ROM, \circledb{2}~BootLoader stage~2~(BL2) for Trusted Boot Firmware, \circledb{3}~BootLoader stage~3-1~(BL3-1) for EL3 Runtime Firmware, \circledb{4}~BootLoader stage~3-2~(BL3-2) for Secure-EL1 Payload (optional), \circledb{5}~BootLoader stage 3-3~(BL3-3) for Non-trusted Firmware.

\textbf{Secure boot:} ATF implements a secure boot in which every component along the boot chain \textcircled{P\#} verifies the authenticity and integrity of the next component. Since BL1 does not have a preceding component, it has to be axiomatically trusted. Thus, BL1 verifies BL2, BL2 verifies BL3.x, and so forth. Verification is usually based on certificates, where a hash of a trusted (vendor) public key is fused into the hardware and is available to BL1 to ensure a trustworthy signature of BL2. At the end of a successful secure boot, every component in the boot chain has been checked for integrity and authenticity before handing control to it. If any verification fails, the boot aborts.

\begin{figure}
	\centering
	\includegraphics[width=0.370\textwidth]{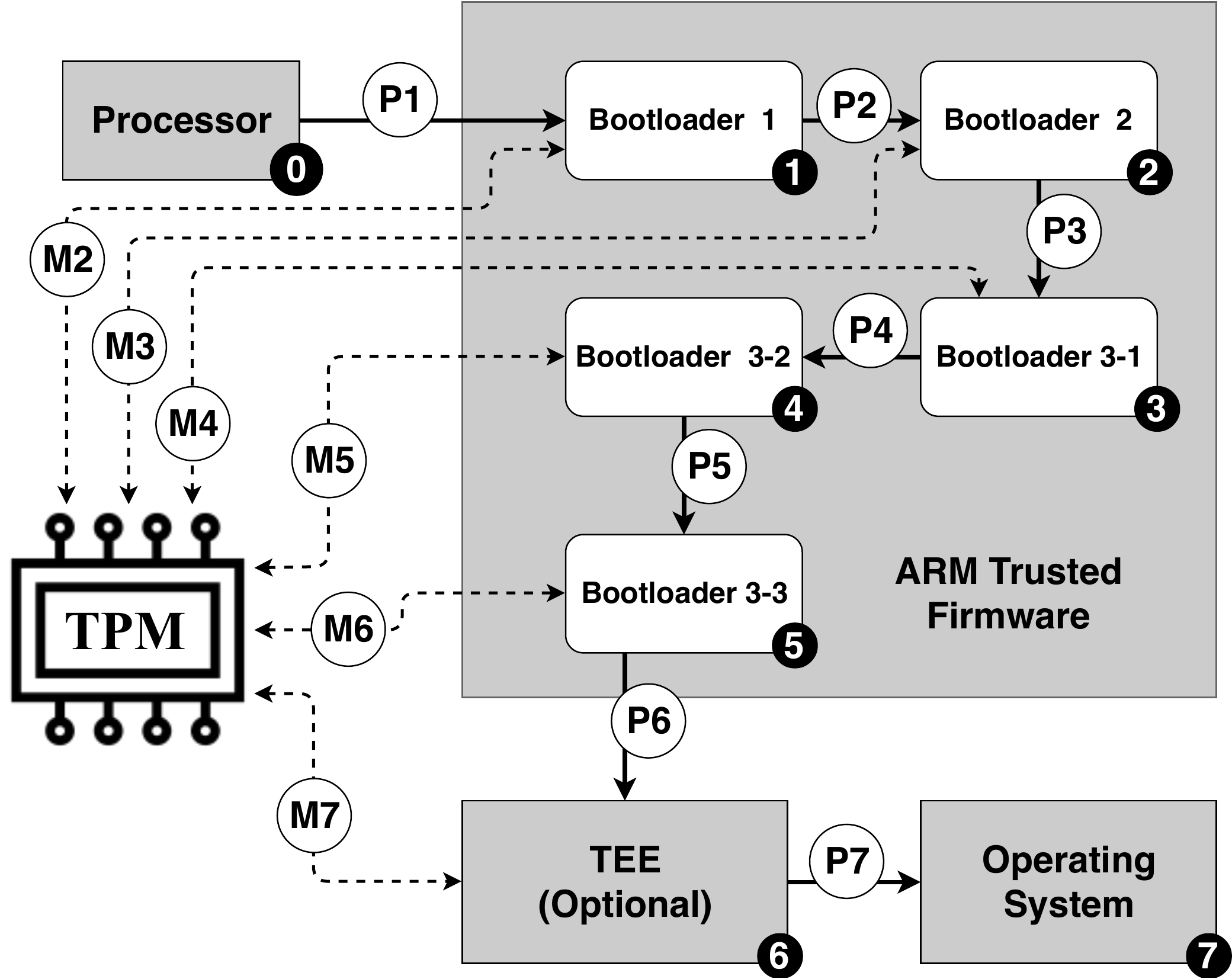}
	\caption{Trusted Boot Process with TPM; P(\#) = boot chain path; M(\#) = measurement of component \#}
	\label{fig:secureBootProcess}
\end{figure}

\subsection{Trusted Platform Module (TPM)}
\label{background:tpm}
TPM by the Trusted Computing Group is the most wide-spread trusted computing technology on end-user devices. By today, the \tpm specification is in its version 2.0, addressing many of the security issues and practical concerns of previous versions 1.0--1.2. According to this specification, a \tpm provides a number of desirable hardware and security features. It is equipped with secure non-volatile memory, a set of platform configuration register (PCR) banks, a processor to run \tpm code in isolation, co-processors for common cryptographic primitives (e.g., RSA, ECC, SHA-1, SHA-256), a clock, and a random number generator. By default, a \tpm is deployed as a hardware chip soldered onto a platform's motherboard. Besides acting as a cryptographic co-processor, a \tpm provides the facilities to securely store measurement about the host platform's configuration (e.g., software state) in its PCRs and to reliably report those measurements to a remote verifier (remote attestation based on a pre-installed endorsement key), as well as creating secure storage through \tpm protected credentials and data sealing with extended authorization policies. Further, the \tpm non-volatile memory, including secure monotonic counters, can be attractive for building security solutions, e.g., version rollback prevention for software updates.

By now, a number of real world applications make use of \tpm. For instance, IBM's password manager uses it for storing keys, Microsoft windows management instrumentation uses TPM for cryptographic co-processing, Intel's Trusted eXecution Technology or AMD's Secure Technology rely on a hardware TPM, several VPN apps can make use of it, \tpm is used in full disk encryption (e.g., Microsoft Bitlocker, dm-crypt), and even browsers like Chrome make use of \tpm for different purposes.

\textbf{Measured boot:} Of particular relevance for this paper is measured (or authenticated) boot based on \tpm (see Figure~\ref{fig:secureBootProcess}). During a measured boot, every component in the boot chain \textcircled{P\#} measures the next component---a cryptographic hash of the component---and then stores this measurement in the PCR of the \tpm (\textcircled{M\#}) before passing on control. Since BL1 does not have a preceding component, it is not measured and acts as the \emph{Root of Trust for Measurement}, which starts the measurement chain. In contrast to a secure boot, the components are not verified and the boot is not aborted, however, after a measured boot the software configuration of the boot components can be attested by the \tpm or used to seal storage to this configuration (i.e., values in the PCR).

\subsection{Subscriber Identification Module (SIM)}
\label{background:sim}
SIM card is the module that authenticates the mobile device in the network. The primary job of the SIM card is to prove the identity of the owner of subscription to the cellular carrier to enable services like calling, Internet, and various others.

Through physically separated pins, a SIM module can achieve the same degree of independence from power supply, reset capability, clock signal, and separated I/O communication with the host platform like a TPM.

Since SIM cards are smart cards, they use command-response communication and the application protocol data unit (APDU) to communicate with their reader. The Android radio interface layer can be extended to send specialized APDU commands to the SIM card, which we use in \ournameplain. It is worth noting, that this APDU command sent by the Android radio interface has to go through the baseband processor. The structure of the APDU commands are defined in the ISO/IEC 7816-4 standard and are recalled later on in Section~\ref{sec:systemdesign}.

\subsection{Cryptographic Preliminaries}

We will use $y \exec \A(x)$ to denote 
the execution of algorithm $\A$ outputting $y$, on input $x$. 
By $r \rexec S$ we 
mean that $r$ is chosen uniformly 
at random over the set $S$.
In our construction we will make use of elliptic curve groups where 
we will use $1_{\GG}$ to
denote the identity element in group $\GG$
and $[n]$ to denote the set $\{1,\ldots,n\}$.
Moreover, we will make use of bilinear maps defined as follows:

\begin{definition}[Bilinear map]
Let us consider  
cyclic groups
 $\mathbb{G}_1$, $\mathbb{G}_2$, $\mathbb{G}_T$
 of a
prime order $p$.
Let $g_1, g_2$ be generators of respectively
$\mathbb{G}_1$ and $\mathbb{G}_2$.
We call $e: \GG_1 \times \GG_2 \rightarrow \GG_T$
a \emph{bilinear map} (pairing) if it is efficiently computable and the following
holds:
\begin{description}
\item[Bilinearity:] $\forall (S,T) \in \mathbb{G}_1 \times \mathbb{G}_2$,
$\forall a, b \in \ZZ_p$, we have $e(S^a,T^b) = e(S,T)^{a\cdot b}$,
\item[Non-degeneracy:] $e(P_1,P_2)\not=1$ is a generator of group $\GG_T$.
\end{description}
\end{definition}

\paragraph{Signature of Knowledge}
We now recall the definition of signatures of knowledge (SoK) formalized by Chase and Lysyanskaya \cite{SoK}. 
In a signature scheme by signing a message we show that the knowledge of the
corresponding secret key. For SoK schemes the idea is a bit different.
We are given a $\NP$ language $L$  and a statement $x \in L$, associated with the user's identity. By signing
a message the user shows that he knows the hard-to-find witness $w$ for this statements. More formally,
a signature of knowledge scheme is defined as follows.
\begin{definition}[SoK]
Let $L$ be a $\NP$ language defined by a polynomial time Turing machine $M_{L}$, such
that all witnesses for $x \in L$ are of known polynomial length $p(\lvert x \rvert)$. Then 
$(\setupSoK,\signSoK,\verifySoK)$ is a signature of knowledge of a witness for $L$, for a message space $\M$ if
the following properties hold:
\begin{description}

\item[Correctness] For all $\lambda$, $x \in L$, valid witnesses $w$ for $x$ (i.e. such that $M_L(x,w)=1$), and
$m \in \M$ the probability
\begin{align*}
1 - \Pr[\crs \exec& \setupSoK(\lambda), \\
\sigma \exec& \signSoK(\crs,M_L,x,w,m):& \\
\accept \exec &\verifySoK(\crs,M_L,x,m,\sigma)  ]
\end{align*}
is a negligible function of $\lambda$.

\item[Simulatability] There exists a PPT simulator $(\simsetupSoK,\allowbreak \simsignSoK)$ such that
for all PPT adversaries $\A$ there exists a negligible function $\epsilon$ such that 
for all polynomials $f$, for all $\lambda$, for all 
auxiliary inputs $s \in \{0,1\}^{f(\lambda)}$
\begin{displaymath}
 \begin{vmatrix}
\Pr[(\crs,\tau) \exec \simsetupSoK(\lambda), \\
b \exec \A^{\simulSoK(\crs,\tau,\cdot,\cdot,\cdot,\cdot)}(s,\crs): b=1] \\
- \Pr[\crs \exec \setupSoK(\lambda), \\
 b \exec \A^{\signSoK(\crs,\cdot,\cdot,\cdot,\cdot)}(s,\crs): b=1] 
\end{vmatrix} = \epsilon(\lambda)
\end{displaymath}
where the oracle $\simulSoK$ receives the values $(M_L,x,w,m)$ as inputs, checks that
the witness $w$ given to it was correct and returns $\sigma \exec \simsignSoK(\crs,\tau,M_L,x,m)$. 
$\tau$ is a trapdoor used by the simulator to simulate signatures without knowing a witness.
\begin{remark}
In schemes relying on the random oracle model, this trapdoor is usually not used.
\end{remark}

\item[Extraction] There exists an extractor algorithm $\extract$ such that for all PPT adversaries
$\A$ there exists a negligible function $\epsilon$ such that 
for all polynomials $f$, for all $\lambda$, for all 
auxiliary inputs $s \in \{0,1\}^{f(\lambda)}$
\begin{align*}
\Pr[&(\crs,\tau) \exec \simsetupSoK(\lambda),  \\
&(M_L,x,m,\sigma) \exec \A^{\simulSoK(\crs,\tau,\cdot,\cdot,\cdot,\cdot)}(s,\crs)  \\
&w \exec \extractSoK(\crs,\tau,M_L,x,m,\sigma): M_L(x,w) =1  \\
&\lor (M_L,x,m,w) \in Q \\
&\lor \verifySoK(\crs,M_L,x,m,\sigma) = \reject] = 1 - \epsilon(\lambda),
\end{align*}
where $Q$ denotes a list of all previous queries $(M_L,x,m,w)$ $\A$ has sent to the oracle $\simulSoK$.

\end{description}
\end{definition}

Typically we obtain signatures of knowledge by converting a three move
zero-knowledge 
proof of knowledge (e.g.  a $\Sigma$-protocol) to 
a signature protocol
by applying the
Fiat-Shamir heuristic \cite{DBLP:conf/crypto/FiatS86}.
We can for example transform a $\Sigma$-protocol for proving the knowledge of
a discrete logarithm of $X$ to a signature of knowledge. We will use the Camenisch-Stadler 
notation \cite{CSNotation} to describe signature of knowledge.
So, the above mentioned $\Sigma$-protocol transformed into
a signature of knowledge on message $m$ will be denoted as
$$
SoK\{(\alpha): g^{\alpha} = X \}(m),
$$
where in this case $\alpha$ is the witness which knowledge we show and $X$ is part of the statement.

\paragraph{Additively Homomorphic Encryption Scheme}
Homomorphic encryption schemes are standard cryptosystems that allow for an
additional operation on ciphertexts. In case of additively homomorphic schemes,
given ciphertext of message $m_1$
and message $m_2$ we can compute the ciphertext of message $m_1+m_2$ (for most 
schemes this is modular addition)
without the knowledge of $m_1, m_2$ and the secret key. 
More formally, we define a
additively homomorphic encryption scheme as follows.
Note that to simplify notation 
we added two algorithms $\add$ and $\mulC$. 

\begin{definition}
An additively homomorphic encryption scheme $\E$ consists of the
following PPT algorithms $(\keygen,\enc,\dec,\add,\mulC)$:
\begin{description}
\item[$\keygen(\lambda)$:] on input security parameter $\lambda$, this
algorithm outputs a secret key $\skE$ and public key $\pkE$. 

\item[$\enc(\pkE,m)$:] on input message $m \in \M$ and public key $\pkE$, this
algorithm outputs ciphertext $c \in \C$. 

\item[$\dec(\skE,c)$:] on input ciphertext $c \in \C$ and secret key $\skE$, this
algorithm outputs message $m \in \M$. 

\item[$\add(\pkE,c_1,c_2)$:] on input ciphertext $c_1 \in \C$, ciphertext $c_2 \in \C$ and public key $\pkE$, this
algorithm outputs ciphertext $c_3 \in \C$. 

\item[$\mulC(\pkE, c_1,m)$:] on input ciphertext $c_1 \in \C$, message $m \in \M$ and public key $\pkE$, this
algorithm outputs ciphertext $c_2 \in \C$. 
\end{description}

We require that an additively homomorphic encryption scheme is correct and $\indcpa$ secure, which
we define as follows:
\begin{description}

\item[Correctness] For all $\lambda$, all $(\skE,\pkE) \exec \keygen(\lambda)$, all $m \in \M$,
all pairs $(m_1,m_2) \in \M \times \M$ such that $m=m_1+m_2$, all pairs $(m_3,m_4) \in \M \times \M$ 
such that $m = m_3 \cdot m_4$, we have
$\dec(\skE,\enc(\pkE,m)) = m$.
Additionally, we have
$\dec(\skE,c_{1,2}) = m$,
where $c_1 \exec \enc(\pkE,m_1)$, $c_2 \exec \enc(\pkE,m_2)$, $c_{1,2} \exec \add(\pkE, c_1,c_2)$
and $\dec(\skE, c_{3,4}) =m$
where $c_3 \exec \enc(\pkE, m_3)$,
and $c_{3,4} \exec \mulC(\pkE, c_3,m_4)$.\\

\item[Indistinguishability under chosen-plaintext attack] An encryption scheme $\E$ is $\indcpa$ secure, if for all PPT adversaries $\A$ we have that the advantage of the adversary $\Adv_{\E,\A}^{\indcpa}(\lambda)$ defined as the probability: 
\begin{align*}
 \lvert \Pr&[(\skE,\pkE) \exec \keygen(\lambda), (m_0,m_1,\st) \exec \A(\pkE), \\
& b \rexec \{0,1\},
c \exec \enc(\pkE,m_b), \hat{b} \exec \A(\st,c): b = \hat{b} ],
\end{align*}
where we require that $m_0,m_1 \in \M$, is a negligible funtion of $\lambda$.

\end{description}
\end{definition}

Notable additively homomorphic encryption schemes are the Paillier cryptosystem \cite{Paillier} and the Okamoto-Uchiyama cryptosystem \cite{OkamotoUchiyama}. Paillier's scheme uses a large RSA modulus $n = p \cdot q$ and operations in the group $\ZZ_{n^2}^*$, where also the ciphertext lies. The message space is $\ZZ_n$.
The Okamoto-Uchiyama cryptosystem uses a modulus of the form $n = p^2 \cdot q$, the ciphertext lies in $\ZZ_n^*$ and the message space is $\ZZ_p$.

\section{Requirement Analysis \& Systematization of Existing Solutions }
\label{sec:reqanalysis}
There exists many approaches to realize \tpm in a way different than using a dedicated hardware TPM.
In this section, we systematically compare different solutions of trusted computing procedures using both hardware and software that are representative for the different implementation options. For comparison, we first re-enumerate the objectives a secure and practical \tpm implementation needs to fulfill (Section~\ref{relatedWork:Objective}) and then discuss the existing solutions (Sections~\ref{relatedWork:fTPM} through~\ref{relatedwork:sgx}). In particular, this systematization should help to understand the trade-off of the proposed solutions in comparison to the default hardware \tpm and where our \ournameplain solution fits into. Table~\ref{tab:countermeasures} summarizes the discussion in the remainder of this section.

\subsection{Objectives}
\label{relatedWork:Objective}
We start by briefly formulating the objectives a trusted module, in particular for mobile devices, should fulfill. We group them into security of the \tpm 
itself, the applicability of the implementation, and desirable deploy-ability objectives.

\subsubsection{Security of \tpm} These are objectives that should be fulfilled to ensure the security of the \tpm state, its execution and trustworthiness, and secure operations.
\begin{description}[labelindent=0cm, leftmargin=0cm]
	\item [\textbf{S1}]\textbf{Confidentiality and integrity of \tpm state:} The \tpm state should be confidential and protected against untrusted code (e.g., host platform, non-TEE apps) and only be available to authorized entities. We assign \tick if the confidentiality and integrity of the state is protected through strong security means (e.g., physical isolation), \kinda if they depend on software integrity (e.g., of the OS), and \no in other cases.

	\item [\textbf{S2}]\textbf{Rollback Protection:} Reverting the \tpm state back to a former version must be prevented or at least be detectable. We assign \tick if rollback protection is guaranteed through hardware means (e.g., hardware counters), \kinda if there is a dependency on untrusted OS but rollbacks can be detected, \no if no rollback protection or detection is provided.

	\item [\textbf{S3}]\textbf{Trustworthy Endorsement:} A \tpm should be carrying an asymmetric encryption key called Endorsement key (EK) that can live as long as the \tpm and for which credentials exist that verify the authenticity of the \tpm and allow a verifier to recognize a genuine \tpm. We assign \tick if endorsement credentials are available to the \tpm (e.g., pre-installed at manufacturing time or derived from other verifiable credentials), \kinda if the \tpm has to create an EK and prove it is genuine through a remote verification, \no otherwise.

	\item [\textbf{S4}]\textbf{Secure Counter:} \tpm has to provide secure, persistent monotonic counters, e.g., for its clients or extended authorization policies. We assign \tick if the \tpm provides such counters backed by hardware support or NV-storage of the \tpm software state that is protected (i.e., \textbf{S1, S2} both \tick). We assign \kinda if the security of the counter depends on software integrity (e.g., of the OS or hypervisor). Otherwise \no.

	\item [\textbf{S5}]\textbf{Secure Clock:} A clock is needed for attestation, for generation of timed attestation keys, and for authorization policies with lock-out time. If a secure clock is available to the \tpm (e.g., its own hardware clock), we assign \tick; if the clock depends on shared resources but manipulation can be detected we assign \kinda, otherwise \no.

	\item [\textbf{S6}]\textbf{Security of \tpm Execution:} The execution of the \tpm code or firmware has to be protected against compromise. We assign \tick if a strong security boundary exists between untrusted code and the \tpm execution environment (e.g., dedicated physical chip). If the execution environment shares hardware resources (e.g., CPU or RAM) with untrusted code and the shared resources provide isolation (e.g., modes of operation of CPU and separate memory regions), we assign \kinda, since the shared resources open an attack surface. If the security of the \tpm execution environment is based purely on software means (e.g., hypervisor or OS), we assign \no for this weakest form of isolation.

\end{description}

\subsubsection{Applicability} These are objectives related to the application of \tpm, such as authenticated boot or providing secure storage to clients.
\begin{description}[labelindent=0cm, leftmargin=0cm]
	\item [\textbf{A1}]\textbf{Secure Persistent Storage:}  \tpm provides a persistent storage to securely store limited amounts of data (e.g., certificates). We assign \yes if the \tpm provides such storage (e.g., NV-RAM in a dedicated chip) and \kinda if the persistent storage is part of an outsourced \tpm state that is protected (i.e., \textbf{S1, S2} both \tick). We assign \scros in other cases. 

	\item [\textbf{A2}]\textbf{Early Availability:} A main use-case for \tpm is storing the measurement of loaded software components, i.e., measured boot. To be able to attest the entire software stack, the \tpm has to be early available during the boot sequence. If the trusted module is available as soon as the platform has power, we assign \tick. Otherwise, if the \tpm becomes available at late stage during boot (e.g., after initializing a separate execution environment), we assign \no.

	\item [\textbf{A3}]\textbf{Multiple Stakeholders:} Computer systems, in particular mobile platforms and enterprise devices, usually have multiple stakeholders co-existing with an interest in protecting credentials and software on the platform (e.g., end-user, administrator, network operator, software vendor). If the \tpm was designed to support both platform software and users (e.g., distinct hierarchies), we assign \yes. If the \tpm primarily supports the platform but offers limited functionality to the end-user, we give \kinda. If the \tpm was designed solely as support for the platform vendor, we give \no.
\end{description}

\subsubsection{Deploy-ability} Objectives related to the deployment of \tpm, in particular if deployment complies with the requirements of mobile devices or if it is bound to a specific platform.
\begin{description}[labelindent=0cm, leftmargin=0cm]
	\item[\textbf{D1}]\textbf{Mobile Availability:} We want to have the \tpm available for mobile devices. This imposes strict constraints, such as not changing the current architecture by adding a new on-board chip. If the \tpm implementation adheres to this constraints, we assign \tick, otherwise \no.

	\item[\textbf{D2}]\textbf{Movability:} The TCG specification has introduced the \tpm as being bound to its host platform (e.g., fixed part of the motherboard). However, depending on the context, the movability of the \tpm to another platform is desirable, e.g., if an associated virtual machine migrates to another platform. If the \tpm is generally easily moved to another platform, we assign \yes, if it is bound to a specific platform, we assign \no.

	\item[\textbf{D3}]\textbf{Bound RTM:} The measurements during a measured boot are given to the \tpm by the host platform, starting with the Root of Trust for Measurement~(RTM). To ensure that the provided measurements indeed describe the \tpm's host platform's configuration, \tpm and RTM must be bound together on the same platform. If this binding is achieved via physical means (e.g., \tpm and RTM are fixed parts of the same motherboard), we assign \yes. If the \tpm receives those measurements from another trusted entity (e.g., another, bound \tpm, or a secure boot anchored at the RTM), we assign \kinda. If the \tpm cannot establish trust into the RTM, we assign \no.
\end{description}

In Section~\ref{background:tpm}, while introducing the hardware \tpm, we explained all its properties, which allow the \tpm to achieve the objectives we defined in Section~\ref{relatedWork:Objective} and summarized in Table~\ref{tab:countermeasures}. Objectives \textbf{S1} to \textbf{S6} and \textbf{A1} to \textbf{A3} are our interpretation of properties derived from TCG's mobile TPM~\cite{TCGMTMSpe, TCGMTMSpec2} and standard TPM specification~\cite{TPM1Spec, TPM2Spec}. We define \textit{Deploy-ability} as added objectives that \ournameplain should achieve. The current TCG specifications do not stipulate a removable TPM.  We will use the standard hardware \tpm as the baseline that \ournameplain should achieve. 

\begin{table*}
\caption{Comparison of existing \tpm implementations}
\label{tab:countermeasures}
	\centering
	\begin{tabular}{@{} p{3.5cm} p{3.5cm} p{4.7cm}  *{3}{c >{\columncolor{Gray}}c}  }\\
		\tikz \node {\textbf{Category}}; &
		\tikz \node {\textbf{}}; &
		\tikz \node {\textbf{Objective}}; &
		\tikz \node[rotate=90] {\ftpm~\cite{RajSWACEFKLMNRS16}}; &
		\tikz \node[rotate=90] {vTPM$^{\1}$~\cite{BergerCGPSD06}}; &
		\tikz \node[rotate=90] {Intel SGX~\cite{journals/iacr/CostanD16}}; &
		\tikz \node[rotate=90] {\ournameplain}; &
		\tikz \node[rotate=90] {Hardware \tpm};
	    \\

		\hline
		\multirow{8}{\linewidth}{Security of \tpm} &  
		\multirow{3}{\linewidth}{Security of \tpm state} & 
		\textbf{S1.} Confidentiality and integrity  & 
		\yes & \yes / \kinda & \na & \yes & \yes 

		\\
		\multicolumn{1}{ c  }{}&
		\multicolumn{1}{ c  }{}&
		\textbf{S2.} Rollback protection  & 
		\yes & \yes / \kinda & \yes & \yes & \yes 

		\\
		\hhline{~*{7}{-}}
		\multicolumn{1}{ c  }{}&
		\multicolumn{1}{ c  }{}&
		\textbf{S3.} Trustworthy Endorsement  & 
		\yes & \kinda / \kinda & \yes & \yes &  \yes  

		\\
		\multicolumn{1}{ c  }{}&
		\multicolumn{1}{ c  }{}&
		\textbf{S4.} Secure counter  & 
		\yes & \yes / \kinda & \yes & \yes & \yes 

		\\
		\multicolumn{1}{ c  }{}&
		\multicolumn{1}{ c  }{}&
		\textbf{S5.} Secure clock  & 
		\kinda & \yes / \no & \no & \yes & \yes 

		\\
		\multicolumn{1}{ c  }{}&
		\multicolumn{1}{ c  }{}&
		\textbf{S6.} Security of \tpm execution & 
		\kinda & \yes / \no & \na & \yes & \yes 

		\\ \hline
		\multirow{3}{\linewidth}{Applicability} &
		\multirow{3}{\linewidth}{}&
		\textbf{A1.} Secure persistent storage  & 
		\kinda & \yes / \no & \kinda & \yes & \yes  

		\\
		\multicolumn{1}{ c  }{}&
		\multicolumn{1}{ c  }{}&
		\textbf{A2.} Early availability  & 
		\no & \yes / \yes & \no & \yes & \yes 

		\\
		\multicolumn{1}{ c  }{}&
		\multicolumn{1}{ c  }{}&
		\textbf{A3.} Multiple stake holder  & 
		\no & \yes / \yes & \yes & \yes & \yes 

		\\ \hline
		\multirow{2}{\linewidth}{Deploy-ability}&
		\multirow{2}{\linewidth}{}&
		\textbf{D1.} Mobile availability  & 
		\yes & \no / \no & \no & \yes & \no 

		\\
		\multicolumn{1}{ c  }{}&
		\multicolumn{1}{ c  }{}&
		\textbf{D2.} Movability  & 
		\scros & \tick / \tick & \scros & \tick & \scros 

		\\
		\multicolumn{1}{ c  }{}&
		\multicolumn{1}{ c  }{}&
		\textbf{D3.} Bound RTM  & 
		\tick & \no / \kinda & \tick & \kinda & \tick 

		\\\hline
		\multicolumn{8}{c}{\scriptsize \tick = fulfilled by the implementation; \kinda = partially fulfilled by the implementation; \scros = not fulfilled by the implementation; \na = not applicable for the implementation}\\
		\multicolumn{8}{c}{\scriptsize \1 First column is for \textit{Secure co-processor based vTPM} (SCoP) implementation and second column is for \textit{Software only vTPM} (SW-only) implementation}
	\end{tabular}
\end{table*}

\subsection{fTPM}
\label{relatedWork:fTPM}
Specifically for the mobile domain, a number of past implementations~\cite{Winter08,bugiel09:stc,RajSWACEFKLMNRS16} leveraged trusted execution environments~(TEE) to realize a software-based \tpm. We use Microsoft's \ftpm~\cite{RajSWACEFKLMNRS16} as a representative for those implementations, since it is one of the most recent solutions. 
The \ftpm implementation is widely deployed in Microsoft mobile devices using a TEE on top of ARM TrustZone (\textbf{D1:~\tick}). TrustZone creates a memory and process isolation between the protected environment ("secure world") running inside the TEE and the "normal world" (i.e., Android or similar), and allows the execution to switch contexts between those two worlds via a secure monitor. 

\ftpm provides confidentiality, integrity (\textbf{S1:~\tick}), and rollback protection (\textbf{S2:~\tick}) for \ftpm states by creating a trusted storage through a combination of encryption with fused keys, device UUID, and Replay Protected Memory Block~(RPMB) with authenticated writes and write counter. Any form of secure persistent storage the \ftpm offers to clients is based on this securely outsourced state (\textbf{A1:~\kinda}), which is also used to provide secure counters to clients (\textbf{S4:~\tick}).

Due to ARM TrustZone, the execution of the \ftpm environment is isolated from the normal world, however, both worlds still share the CPU and RAM (\textbf{S6:~\kinda}), which has opened TrustZone TEEs to attacks~(e.g.,~\cite{MachiryGSSSWBCK17}).

\ftpm does not have a separate secure clock. It uses the clock of the system in cooperation with the untrusted OS (\textbf{S5:~\kinda}). To handle the shared clock situation, \ftpm implements \textit{fate sharing}, where \ftpm refuses to provide any functionality if the OS does not cooperate. 

\ftpm is primarily designed to provide \tpm support to the platform vendor (\textbf{A3:~\scros}). The \ftpm is a software implementation and bound to one device (\textbf{D2:~\scros}), since it derives many of its credentials from device-specific keys or UUIDs, including its endorsement credentials (\textbf{S3:~\yes}).

Since the \ftpm is implemented as software in the TEE on top of ARM TrustZone, the \ftpm becomes only available once the TEE has been initialized during the boot sequence (see also Section~\ref{sec:background}). That means the \ftpm (or any TEE-based \tpm) is not early enough available to store measurements of the early boot stages (\textbf{A2:~\no}). But this can be alleviated by introducing shared memory between the bootloaders and TEE for measurement storage. We will discuss this solution in more details in Section~\ref{bindingToTheDevice}.

Although the \ftpm is only available after the bootchain has created the TEE, the secure boot transitively extends the trust put into the RTM (BL1) to the remainder of the secure bootchain on the same platform as the TEE. Thus, \ftpm can assume that the measurements are done as if by the RTM on the same platform (\textbf{D3:~\yes}) if the measurements comes from a component of the secure bootchain.

\subsection{vTPM}
Another way of implementing a software TPM is by creating virtual instances over a physical TPM~\cite{BergerCGPSD06}. This, in particular, targets cloud environments in which virtual machines need a \tpm, but sharing a single physical \tpm (or providing an array of physical \tpm) is not an option. The representative work for virtual TPM, or vTPM, is based on the Xen hypervisor and proposes two different implementation options: 1)~a software only implementation with vTPM instances running inside a privileged VM, and 2)~a secure co-processor (SCoP) to run all vTPM instances with better isolation at the cost of additional hardware. Both options are not feasible for mobile TPMs (\textbf{D1:~\no}), since virtualization is not sufficiently supported or effective, and adding a secure co-processor is too costly in terms of space and power. However, by design vTPMs must be movable to different platforms to support migration of associated VMs between platforms (\textbf{D2:~\yes}).

In both deployment options, a vTPM has to create its endorsement key at creation time. To establish trust into the EK for a remote verifier, a genuine, primary TPM on the platform (hardware TPM) must attest the trustworthiness of the vTPM's EK (\textbf{S3:~\kinda}).

In case of \cvtpm, the \tpm logic and vTPM instances are executed inside the secure co-processor (\textbf{S6\textbar\cvtpm:~\yes}). Further, the secure co-processor used in \cite{BergerCGPSD06} (an IBM PCIXCC) provides CMOS RAM backed persistent storage. We assume it provides the confidentiality, integrity, and rollback protection of the vTPM states as well as sufficient secure persistent storage to the vTPM clients (\textbf{S1, S2, A1\textbar\cvtpm}:~\yes). The same co-processor also offers facilities for secure counters (\textbf{S4\textbar\cvtpm:}~\tick) and a secure clock (\textbf{S5\textbar\cvtpm:}~\tick).

For \hvtpm the vTPM instances reside in Xen's privileged \emph{dom0}. Thus, their execution is protected from untrusted VMs by only the Xen hypervisor (\textbf{S6\textbar\hvtpm:}~\no), and their state, when stored in persistent storage in \emph{dom0}, is also protected by only the access control and isolation of the hypervisor and \emph{dom0} (\textbf{S1, S2\textbar\hvtpm:}~\kinda). Similar, the protection of any persistent storage offered to vTPM clients depends on the integrity and trustworthiness of \emph{dom0} (\textbf{A1\textbar\hvtpm:}~\no) as does any counter stored in the vTPM state (\textbf{S4\textbar\cvtpm:}~\kinda). A vTPM relies on the platform's clock shared between all vTPMs including untrusted code and not specifically protected (\textbf{S5\textbar\cvtpm:}~\no).
Although vTPM instances are created after the host platform has booted up, a vTPM receives the initial measurement from the underlying hardware \tpm of its platform, which also attests the vTPM trustworthiness, and \emph{dom0} protects the vTPM state from migrating to an untrusted platform (\textbf{D3\textbar\hvtpm:~\kinda}). Further, vTPM instances are created together with their associated VM, hence, allowing the VM to measure its entire bootchain and store the measurements in its vTPM (\textbf{A2:}~\tick).

In case of \cvtpm, the \tpm resides entirely in the IBM PCIXCC, a removable peripheral. Thus, no physical binding to the RTM exists and no authenticity/trustworthiness of the RTM is being ensured (\textbf{D3\textbar\cvtpm:}~\no), hence, an attacker could move the \tpm to an untrusted platform that feeds the \tpm with arbitrary measurements. This situation is very similar to our \ournameplain, which is also removable, and we discuss solutions to this challenge in Section~\ref{bindingToTheDevice}, which might also be applicable to \cvtpm.

The vTPM does not make any assumptions about which stakeholder---user or platform---within the associated VM uses the vTPM and supports, like a regular hardware TPM, multiple hierarchies (\textbf{A3:}~\tick).

\subsection{Intel SGX}
\label{relatedwork:sgx}
Although Intel SGX is not an implementation of a TPM but a solution to allow applications to establish a TEE, \emph{enclave} in SGX jargon (\textbf{S1, S7: \na}), we include it here for comparison because it offers in many dimensions similar protections as a hardware TPM and shares a lot of a TPM's objectives (we mark non-applicable objectives with \na in Table~\ref{tab:countermeasures}). For this work we have only considered stock SGX implementations in Intel processor to keep the comparison on par with other candidates. SGX is currently only supported by desktop and server class Intel processors (\textbf{D1:}~\no) and binds any credentials, like generated and derived keys, and transitively sealed data strictly to the CPU (\textbf{D2:}~\no).

In SGX, \textit{attestation} means verifying that a certain enclave code was initialized correctly and not tampered with by the untrusted host OS. For \textit{remote attestation} in SGX an Intel-provided \textit{Quoting Enclave} provides the facilities to enclaves to do direct anonymous attestation~(DAA) using attestation keys endorsed by Intel (\textbf{S3}:~\tick). The SGX extensions to the CPU measure the enclaves, hence, the enclaves are physically bound to their RTM (\textbf{D3}:~\tick).

SGX supports enclaves in sealing data for storing it on untrusted persistent storage, since enclaves themselves do not have any persistent storage like NV-RAM (\textbf{A1:}~\kinda). In addition, Intel has added support for monotonic counters~\cite{sgxCounter,MateticAKDSGJC17} that allow rollback protection of sealed data (\textbf{S4, S2: \tick}).

It is a processor based technology, so it can fully utilize the clock of the system. But the current SGX implementation does not accommodate a trusted and fine grained clock for the user-level enclaves. There is a  certain API, provided by Intel e.g., \lstinline|get_trusted_time|. But this call can be arbitrarily modified by the untrusted OS, since it requires to make an \textit{OCALL}~\cite{AlderARXIV2018,ChenCCS2017,Liang2018,asokan_keynote,SGXSecureClock}. Moreover, any timing mechanism must account for the fact that the OS can interrupt the enclave at any point in its execution, wait for an arbitrary period of time, and then transparently resume the enclave using \textit{ERESUME} (\textbf{S5: \no}).

Both regular applications and system software can use enclaves and SGX is not restricted to particular stakeholders (\textbf{A3:}~\tick). However, an early firmware initialized enclave is not possible, since the OS is needed for memory management of enclaves (\textbf{A2:}~\cros).

\subsection{Java-card based MTM}
Dietrich and Winter proposed a way of implementing a mobile trusted module (MTM) in a Java-based smart-card for mobile devices~\cite{DietrichW10,EnglandT09}. The implementation is for applications running on mobiles and the TPM communicates through NFC.

The TPM is installed as a set of applets in the Java-card, where a \textit{master-applet} provides services to other applets, like TPM command handling and controls the access to the endorsement key. The actual processing of TPM commands is handled by specific applets implementing those commands.

Although this implementation seems like closest related work to our \stpm, their work described a proof-of-concept prototype and is unfortunately silent about many aspects, such as secure persistent storage, and some functionality is not available, such as attestation of the system or authenticated boot. The Java-card communicates with the system over NFC, so binding the card with the system is not possible and early availability of the trusted module is also not possible before the NFC driver is loaded.

Their implementation provides important insights on the implementation of MTM on mobile devices through a programmable TPM and presented pioneering work, but given the lack of documentation and also differences in engineering (see Section~\ref{sec:systemdesign}), we cannot provide a full and fair comparison with \ournameplain and exclude it from our systematization.

\section{System Design and Security Analysis}
\label{sec:systemdesign}

The main component of \ournameplain is a smart card based implementation of a SIM
TPM. However, to properly work it also requires changes in the bootloader and the operating system
(i.e., Android). In this section, we describe the design and implementation of \ournameplain in more details. We also discuss how our solution solves the shortcomings described in Section~\ref{sec:reqanalysis} and argue about our design's security. Along with the design descriptions, we indicate how the objectives shown in Table 1 are met by \ournameplain.

\subsection{SIM TPM}
\label{sysDesign:simcard}

Modern SIM cards are usually general purpose smart cards running an applet created by the mobile
network provider. The two most prominent smart card technologies are Java Cards and Multos cards.
Both introduce a custom OS (i.e., Java Card OS and Multos OS) and APIs that can be used by
programmers for cryptographic (e.g., encryption, signing) and non-cryptographic (e.g., memory allocation and copy)  operations that are implemented and executed
directly on the microprocessor. Depending on the technology, applets can be programmed in C/C++ (e.g., Multos cards) or in Java (e.g., Java Card).
Additional cryptographic algorithms, not provided by the API, can be implemented in software.

Both card technologies have support for multiple applets. To properly manage them,
cards provide a specialized security manager that is responsible
for installing and deleting of user defined applets. Once an applet is uploaded, the security
manager creates its instance and allows the applet to create necessary objects and allocate memory.
\label{systemDesign:simCardDesign}

\subsubsection{API Limitations of Smart Cards}

As mentioned above, each Smart Card OS provides a card specific API that allows applets to perform extended operations.
 This forces the programmer to use only a predefined set of functions. For example, in case of Java cards the API supports only a subset of the standard Java language and is limited to high level cryptographic operations (e.g., encryption, hashing, signing).
There is no support for mathematical functions like modular multiplication or elliptic curve point addition, which are one of the main building blocks of public key cryptography. In other words, the developer cannot use hardware support for those low-level operations and is limited to software implementations that are inefficient  due to the overhead of the virtualization layer.

Obviously, those limitation do not directly concern TPM commands that only use basic cryptographic operations. Unfortunately, the TPM standard defines a remote attestation scheme that is not supported by the cards
API, because it uses, e.g., zero-knowledge proofs.
This constitutes an interesting engineering problem that we solve.  In particular, we were able to implement  \ournameplain on a Gemalto MultiApp Multos smart card with an Infineon SLE78CLX family microprocessor. This card also helped us achieving process isolation from the general-purpose processor (\textbf{S6: \tick}).
It is worth noting, that in this paper we focused mainly on the Multos API \cite{capi}, because it supports a broader range of functions than the Java card API. In particular, we were able to efficiently implement a remote attestation scheme on-card.



\subsubsection{Smart Cards and TPM Command Parsing}
\label{sTPMandAndroidChanges}

\begin{figure}
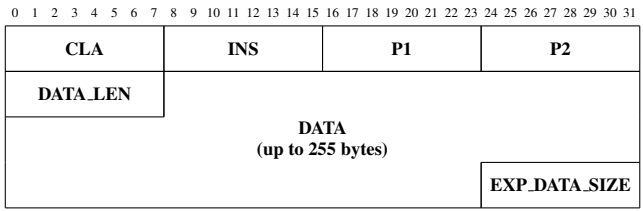

	\begin{bytefield}[bitwidth=0.75em, boxformatting=\centering\scriptsize\bfseries]{32}
		\bitheader{0-31} \\
		\bitbox{8}{CLA} & \bitbox{8}{INS} & \bitbox{8}{P1} & \bitbox{8}{P2}  \\
		\bitbox{8}{DATA\_LEN} & \bitbox[lrt]{24}{} \\
		\wordbox[lr]{1}{DATA \\ (up to 255 bytes)} \\
		\bitbox[lrb]{24}{} & \bitbox{8}{EXP\_DATA\_SIZE}
	\end{bytefield}

	\caption{Generic APDU command structure}
	\label{systemDesign:APDUstructure}
\end{figure}

SIM cards are connected to the main processing unit over a separate bus and available for mobile telephony services (\textbf{D1: \tick}). Smart cards work in a command/response manner, i.e., given an input the card executes the code and returns a response.  The input data is defined by an APDU command (see Figure~\ref{systemDesign:APDUstructure}), which consists of a
class byte (CLA), an instruction byte (INS), two bytes for parameters (P1, P2), one byte for the expected response length, one byte for the data length (DATA\_LEN), and DATA\_LEN bytes of data. The cards' response contains the response data and two bytes that constitute the status word (not shown in the figure). The data field is limited to 255~bytes. There exist an extended length APDU specification that allows for a larger data field but it is not widely
implemented.

In a multi applet system, an APDU command will be forwarded to the currently selected applet. To select an applet, the \verb+SELECT+ APDU command with an unique applet identifier has to be sent to the card. This command is then recognized and executed by the OS.
Once selected, the applet can parse incoming commands according to its work flow. In particular, this means that the developer can use the instruction and parameters bytes to program the behavior of the card. 

The APDU data structure provides a convenient way to communicate with the card. We designed a custom APDU command that implements TPM commands. The data length size of up to 255~bytes is sufficient for the payload sizes of most TPM commands, and for TPM commands with larger payload sizes (e.g., sealed data blobs), we
send the payload split across multiple APDU messages and use the parameter bytes to communicate the card if
more data is to be expected.

\paragraph*{Changes to Android's radio interface:} The Android Radio interface layer (RIL) is responsible for communicating with the device's SIM card. To allow RIL to communicate with \stpm, we introduced a set of TPM commands. We implemented a custom RIL as a shared library, which sends APDU commands as bulk transfer to the \stpm and receives its responses.

\subsubsection{TPM Commands}

We now briefly discuss how we designed the card to handle basic TPM commands related to PCR banks and sealing.
The former case is easy, the applet reserves enough non-volatile memory to store the PCRs. The number of banks is defined by the installation parameter of the TPM applet, which also defines the algorithm we use to extend the PCR (e.g., \verb+SHA1+ or \verb+SHA256+). In a standard setup we use 24 PCRs. For the \verb+TPM_EXTEND+ and \verb+TPM_READ+ commands we used two separate instruction bytes (respectively, \verb+0x10+ and \verb+0x20+) to form the APDU. In both cases the number of the PCRs is given using parameter P1.

To design (un-)sealing on a smart card was a bit harder. Due to the limited input data size, the card has to encrypt/decrypt the input in chunks, which are split across multiple APDU messages. The storage key for sealing is generated by the card after receiving the \verb+TPM_INIT+ command. The key is stored in the non-volatile memory that is allocated during installation of the applet (see next Section~\ref{sec:design:pcrnvram}).

It is worth noting that smart cards can be programmed to execute all TPM commands that require basic cryptographic algorithms, on-card key generation, key agreement, or storing data in volatile/non-volatile memory.
Unfortunately, the privacy-preserving variant of remote attestation (i.e., direct anonymous attestation, DAA)
requires zero-knowledge proofs and other unsupported crypto operations.
What is more, in versions below TPM 2.0 the specification defined only one algorithm for
anonymous attestation~\cite{Brickell:2004:DAA:1030083.1030103}, which is based on groups with hidden order
(i.e., using a RSA modulus) and Camenisch-Lysyanskaya signatures. The TPM 2.0 specification, however, allows for algorithm agility. We leveraged this fact and used a custom
scheme, which we present in the next subsection. 
Here, we only draft the idea behind the scheme, which follows the generic approach used by other DAA schemes: The TPM receives a signature/certificate under its secret
DAA key from an authority. It then uses this secret key to certify its attestation key using a proof. In this
zero-knowledge proof the TPM shows that it knows a certificate under a DAA key and a signature created using
this key under an attestation key. The scheme uses Boneh and Boyen~\cite{eurocrypt-boneh-boyen} signatures and an efficient zero-knowledge proof for the above statement that is made non-interactive using the Fiat-Shamir transformation~\cite{DBLP:conf/crypto/FiatS86}. The main advantage of the scheme is that it can be executed solely by the TPM (i.e., on-card) and does not require any involvement of the host platform.
To further improve efficiency of our scheme, we decided to optimize the workload between commands, i.e., if the \verb+TPM_CREATE+ command recognizes that the TPM is creating an attestation key, it already does some pre-computation for the DAA certification. 

\subsubsection{PCR and NV storage}
\label{sec:design:pcrnvram}
All smart cards implement a small amount of non-volatile storage that can be used for various purposes. This memory of the smart card is by design tamper-resistant and therefore offers memory isolation from the rest of the system (\textbf{S6: \tick}). Modification of this memory is only possible by the applet that reserved it and we reserve some of the NV storage for the \stpm (\textbf{A1: \tick}). Smart cards are equipped with features preventing updates of its internal state by the outside world. To update stored content (e.g., applets), one has
to issue an authorized command to the card manager to update storage or perform applet specific commands, e.g., PCR extension (\textbf{S1: \tick}). Our \ournameplain is equipped with PCR banks that are initialized when power cycling the device and, hence, the SIM card, and can only be changed between power cycles using \verb+PCR_EXTEND+.

System software or user level software can keep a counter containing the current version of the software inside the NV-storage and updates to the counter are only allowed via authorized commands. This provides an easy setup for secure counter and rollback protection (\textbf{S4: \tick}).

\subsubsection{Trustworthy endorsement \& Clock}
Trustworthy endorsement of a TPM is very important. The standard solution is to use an asymmetric encryption key called endorsement key. This key is unique per TPM and should stay alive as long as the TPM is alive. This key differentiates a genuine from a rogue TPM. \ournameplain can achieve secure endorsement by putting a (vendor) certified endorsement key inside its NV-storage and implementing TPM logic that ensures that the private portion of the key is never released to the outside world (\textbf{S3: \tick}).

SIM cards are equipped with a clock pin connected to the baseband processor. Thus, they cannot be clocked higher or lower by an untrusted application or OS. This separate clock helps \ournameplain to work on a different clock frequency not under direct influence of the main processor. What is more, the baseband processor can be used as a secure external clock. In particular, since the baseband processor is by default isolated with a strong security boundary from untrusted code on the platform, it can prepend any APDU command with an APDU command containing the current time (this can also be limited to time-sensitive TPM commands only). This way \ournameplain can be provided with a secure clock (\textbf{S5: \tick}).

\subsubsection{Movability \& Stakeholders}
The other unique feature of the \ournameplain architecture is its movability (\textbf{D2: \tick}). \ournameplain implements the TPM inside the SIM card. So by design, \ournameplain can be transferred to a different device. This creates some interesting use-cases, which we discuss in more details in Section~\ref{switchingSIMcardOrDevice}, but also challenges, which we discuss separately in Section~\ref{bindingToTheDevice}. \ournameplain is not specifically bound to one particular stakeholder and supports the multiple stakeholder model proposed by TCG (\textbf{A3: \tick}), although we think the end-users and their apps are the primary beneficiaries of \ournameplain.

%


\subsection{Our DAA Scheme}

We begin by describing a generic construction/idea of DAA schemes.
First the issuer holding a public key $\gpk$ and a TPM
interact in a join-issue protocol upon which
the TPM obtains a signature (certificate) $cert$ on his secret key $\sk_u$, but 
without revealing the secret key to the issuer. 
Having a certificate on $\sk_u$, the TPM  
 randomizes it
 and produces a signature of knowledge of the secret key
that is certified. 
In some schemes, the TPM uses the secret key $\sk_u$ to 
compute a pseudonym $\nym$ with regards to a basename $\bsn$.
Using the Camenisch-Stadler notation \cite{CSNotation} we can define the 
Signature of Knowledge (SoK) created by the TPM as follows:
\begin{align*}
SoK\{(\sk_u, cert):  \nym = & \hash(\bsn)^{\sk_u} \\
\land & \mathsf{Verify}_{cert, \gpk}(\sk_u) = 1\},
\end{align*}

In the above SoK, $\hash$ denotes a hash function and
$\mathsf{Verify}$ is the verification procedure of the certificate $cert$ on $\sk_u$
with public key $\gpk$.
A common technique used to construct such signatures
of knowledge is to first design a $\Sigma$-protocol, 
which can be used to prove knowledge about linear relations
of exponents. Finally, the designed $\Sigma$-protocol is
transformed into a non-interactive proof (i.e. signature) via the Fiat-Shamir heuristic.
The role of the pseudonym $\nym = D^{\sk_u}$, where $D = \hash(\bsn)$, is to provide 
linkability within a single service identified by the basename $\bsn$.
However, in case no such linkability is necessary, the common
approach is to use a random $D$.

Security of this generic construction follows from the fact that the SoK is extractable and
zero-knowledge. The first property ensures that any security reduction is able to extract a valid
certificate/secret-key pair from a given attestation. Thus, in case the adversary manages to 
create a new TPM the security reduction
can extract the certificate/secret-key pair and break unforgeability of the used signature scheme. 
Note that certificates
are actually signatures of the issuer on the secret keys on TPM DAA keys.
The fact that the TPM can prove in zero-knowledge style that it knows a certificate and that the
pseudonym hides the TPM's identity
is the basis for providing anonymity of the DAA scheme.

We base our construction on the same principles as described above.
The signature scheme that we use is the Boneh-Boyen signature scheme \cite{eurocrypt-boneh-boyen}.
 In scheme~\ref{scheme:daa-join-issue} we 
show how the issuer is able to generate this signature on the TPM's secret key. 
Next in scheme~\ref{scheme:daa-without-GT} we show how to generate a signature of knowledge for the statement
described above and how to verify it. Finally, to prove security we just show that this is in fact a signature of knowledge, i.e. we proof completeness, the existence of an extraction algorithm (to show soundness of the proof) and the existence of a simulator (to show zero-knowledge).

  \begin{scheme}{Direct Anonymous Attestation Setup}{scheme:daa-setup}
\begin{description}
\item{\textsf{Setup}$(\lambda, n)$:}
\begin{enumerate}
\item Choose groups $\GG_1$ and $\GG_2$ of prime order $p$
with a bilinear map $e: \GG_1 \times \GG_2 \rightarrow \mathbb{G}_T$.

\item Choose 
at random  
two group generators $g_1 \rand \GG_1$ and $g_2 \rand \GG_2$, and computes
$g_T \leftarrow e(g_1, g_2)$.

\item Define two hash functions, $\hash$ which 
maps  
into $\ZZ_p$ and $\hash_0$ 
maps  
into $\GG_1$.

\item Select $\sk_\iss \rand \ZZ_p$, set $\gpk_2 \leftarrow  g_2^{\sk_\iss}$
and $\gpk_1 \leftarrow g_1^{\sk_\iss}$.
 
\item Output the public parameters $\crs= (\lambda,\hash, \hash_0, \GG_1, \GG_2, g_1, g_2, g_T, e, \gpk_1, \gpk_2)$.
\end{enumerate} 

\end{description}
\end{scheme} 
 
\begin{scheme}{Direct Anonymous Attestation Join-Issue Protocol (In this protocol the Host acts as a Proxy)}{scheme:daa-join-issue}
\begin{description}
\item{\textsf{Join}$(\crs)$ $\leftrightarrow$ \textsf{Issue}$(\crs, \gsk)$:}
\begin{enumerate}

\item (Platform) The platform generates an additively homomorphic encryption scheme 
$(\skE,\pkE) \exec \keygen(\lambda)$ with message
space that is a superset of $\ZZ_{2^{\lambda+3} \cdot p^2}$.

\item (Platform) The platform chooses $u_i' \rand \ZZ_p$ and computes $U_i' \leftarrow g_1^{u_i}$.

\item (Platform) The platform computes $c_{u_i'} \leftarrow \enc(\pkE, u_i')$.

\item (Platform) The Platform sends $U_i'$, $\pkE$ and $c_{u_i'}$ to the issuer and proves
\begin{align*}
SoK\{\alpha: U_i' = g_1^{\alpha} \land& \; c_{u_i'} = \enc(\pkE, \alpha) \\
\land& \; \alpha \in \ZZ_p \}(\gpk).
\end{align*}

\item (Issuer) The issuer chooses $u_i'' \rand \ZZ_p$, computes
$U_i \leftarrow U_i' \cdot g_1^{u_i''}$.


\item (Issuer) The issuer chooses $b \rand \ZZ_p$,
computes \linebreak
$c_1 \exec  \add(\pkE,\add(\pkE,c_{u_i'}, c_{u_i''}), \enc(\pkE, \sk_\iss))$ and
$c_2\exec \mulC(\pkE, c_1, b)$.

\item (Issuer) The issuer chooses $k \rand \ZZ_{2^{\lambda+2} \cdot p}$,
computes 
$c_{u_i''} \exec \add(c_2,\enc(\pkE, k \cdot p))$ and
$A_i' = g_1^{b}$.

\item (Issuer) The issuer sends $c_{u_i''}$, $A_i'$ 
and $u_i''$ to the platform.

\item (Platform) The platform computes $\sk_{u_i} = u_i' + u_i''$, 
decrypts $t = \dec(\skE, c_{u_i''})$
and computes $A_i  = (A_i')^{t^{-1}} = g_1^{1/(\sk_{u_i} + \sk_\iss)}$.

\item (Platform) The platform verifies that $e(A_i, U_i \cdot \gpk_2) = g_T$ and, if the equation holds, sets $w \leftarrow (A_i, \sk_{u_i})$

%

%


\end{enumerate}
\end{description}
\end{scheme} 
 
  \begin{scheme}{Direct Anonymous Attestation - Signing/Verification}{scheme:daa-without-GT}
  
\begin{description} 

\item{$\signSoK(\crs ,\bsn, w_i, m)$:} 
\begin{enumerate} 

\item (SIM) Parse the secret key as $w_i= (\sk_u, A)$.

\item (SIM)  Compute  $\nym \leftarrow D^{\sk_u}$, where

\begin{itemize}
\item $D \leftarrow \hash_0(\bsn)$, if $\bsn \neq \bot$

\item $D \rand \GG_1$, otherwise.
\end{itemize}

\item (SIM) Choose  
$r \rand \ZZ_p$ 
at random  and compute
$R \leftarrow A^{r}$ 
(so $R= g_1^{r/(\sk_\iss+\sk_u)}$).
Compute \linebreak $B = (g_1 \cdot A^{-\sk_u})^r = g_1^r \cdot R^{-\sk_u}$ (so $B = R^{\sk_\iss}$).

\item (SIM) Compute  
a 
signature of knowledge
\begin{align*}
S \leftarrow SoK\{(\alpha, \beta) :  B =& g_1^\beta \cdot R^{-\alpha} \land \\
D^{\alpha} =& \nym \}(m). 
\end{align*}
Namely, proceed as follows:  
\begin{enumerate}
\item Choose $t_1, t_2 \rand \ZZ_p^2$ 
at random  
and compute
$$
T_1 \leftarrow g_1^{t_2} \cdot R^{-t_1} \text{ ~~and~~ }
T_2 \leftarrow D^{t_1}
$$

\item (SIM) Create a 
challenge:  
$c$ $\leftarrow$ $\hash(\gpk$, $D$, $\nym$, $R$, $B$, $T_1$, $T_2$, $m)$.

\item (SIM) Compute $s_1 \leftarrow t_1 + c \cdot \sk_u$ and $s_2 \leftarrow t_2 + c \cdot r$.

\item (SIM) Set $S \leftarrow (c, s_1, s_2, R, B)$.
\end{enumerate}

\item 
The SIM sends $S$ and $\nym$ (if $\bsn = \bot$ the it sends also $D$) to the Host,
who  outputs the signature $\sigma = (S, \nym)$ (in case $\bsn = \bot$ it adds also $D$ to the signature).
\end{enumerate}

\item{$\verifySoK(\crs, \bsn,  m, \sigma)$:}
\begin{enumerate}

\item Parse the signature $\sigma$ $=$ $(S, \nym)$ ($=$ $(S$, $\nym$, $D)$ in case $\bsn = \bot$).

\item Compute $D \leftarrow \hash_0(\bsn)$ (or in case $\bsn = \bot$ then obtain $D$ from the signature). 

\item Verify the signature of knowledge $S$:

\begin{enumerate}
\item Parse $S$ as $(c$, $s_1$, $s_2, R, B)$.

\item Restore the values 
\begin{align*} 
\tilde{T_1} &\leftarrow   g_1^{s_2} \cdot R^{-s_1}\cdot B^{-c}~.
\\
\tilde{T_2} &\leftarrow  D^{s_1} \cdot \nym^{-c}~.
\end{align*}

\item Check whether  
\begin{equation*}
c \stackrel{?}{=} \hash(\gpk, D, \nym, R, B, \tilde{T_1}, \tilde{T_2}, m)
\end{equation*} 
and
\begin{equation*}
e(R,\gpk) \stackrel{?}{=} e(B,g_2)
\end{equation*}
If yes, then output 1. Otherwise output $0$. 
\end{enumerate} 
\end{enumerate}

\end{description} 
\end{scheme}

\begin{theorem}
Scheme~\ref{scheme:daa-without-GT} is complete.
\end{theorem}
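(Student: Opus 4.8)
The plan is to verify directly that an honestly generated signature passes both checks in $\verifySoK$: the Fiat--Shamir hash equality and the pairing equation. Completeness here involves no probabilistic argument; it reduces to a sequence of exponent identities in $\GG_1$ and $\GG_2$, so the whole proof is a computation once the right invariants are isolated.

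First I would record the two structural facts that hold for an honest signer. From the join-issue protocol (and the check $e(A_i, U_i \cdot \gpk_2) = g_T$ performed there) the witness satisfies $A = g_1^{1/(\sk_u + \sk_\iss)}$. Setting $R \leftarrow A^r$ therefore gives $R = g_1^{r/(\sk_u + \sk_\iss)}$, and a short exponent computation shows $B = g_1^r \cdot R^{-\sk_u} = g_1^{r\sk_\iss/(\sk_u+\sk_\iss)} = R^{\sk_\iss}$, exactly the relation annotated in the scheme. I would also note $\nym = D^{\sk_u}$ by construction. These identities are the only algebraic inputs the rest of the argument needs.

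Next I would show that the verifier reconstructs precisely the commitments the prover hashed, i.e. $\tilde{T_1} = T_1$ and $\tilde{T_2} = T_2$. Substituting $s_1 = t_1 + c\,\sk_u$ and $s_2 = t_2 + c\,r$ into $\tilde{T_1} = g_1^{s_2} \cdot R^{-s_1} \cdot B^{-c}$ and collecting terms, the factor $g_1^{c r} \cdot R^{-c\,\sk_u} = (g_1^r \cdot R^{-\sk_u})^c = B^c$ cancels against $B^{-c}$, leaving $\tilde{T_1} = g_1^{t_2} \cdot R^{-t_1} = T_1$; similarly $\tilde{T_2} = D^{s_1} \cdot \nym^{-c} = D^{t_1} \cdot (D^{\sk_u})^c \cdot \nym^{-c} = D^{t_1} = T_2$ using $\nym = D^{\sk_u}$. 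Since $c$ was defined as the hash of $(\gpk, D, \nym, R, B, T_1, T_2, m)$, the equality $c = \hash(\gpk, D, \nym, R, B, \tilde{T_1}, \tilde{T_2}, m)$ holds and the SoK check accepts. Finally, the pairing check follows from bilinearity together with $B = R^{\sk_\iss}$ and $\gpk_2 = g_2^{\sk_\iss}$: namely $e(R, \gpk_2) = e(R, g_2)^{\sk_\iss} = e(R^{\sk_\iss}, g_2) = e(B, g_2)$.

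There is no genuine obstacle; the only place demanding care is the cancellation in $\tilde{T_1}$, where one must track the signs of the exponents of $R$ and recognize the reconstituted copy of $B^c$ so that it annihilates the $B^{-c}$ term---getting a sign wrong there is the natural way to derail the computation. I would present the $\tilde{T_1}$ and $\tilde{T_2}$ reconstructions and the pairing identity as the three displayed equations and conclude that $\verifySoK$ outputs $1$ with probability $1$ on honestly produced signatures, which is exactly completeness.
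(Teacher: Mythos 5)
Your proposal is correct and follows essentially the same approach as the paper: a direct verification that $\tilde{T_1}=T_1$ and $\tilde{T_2}=T_2$ by substituting $s_1=t_1+c\cdot\sk_u$, $s_2=t_2+c\cdot r$ and cancelling the reconstituted $B^{c}$, plus the pairing identity. The only cosmetic difference is that you verify $e(R,\gpk)=e(B,g_2)$ directly from $B=R^{\sk_\iss}$ and bilinearity, whereas the paper expands $B=g_1^{r}\cdot R^{-\sk_u}$ via the identity $\sk_\iss/(\sk_\iss+\sk_u)=1-\sk_u/(\sk_\iss+\sk_u)$; both amount to the same computation.
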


\begin{proof}
Suppose 
that 
the signer 
holds  
a pair $(\sk_u, A) \in \mathbb{Z}_p \times \mathbb{G}_1$
where $A = g_1^{1/(\sk_\iss+\sk_u)}$ and follows the protocol. In this case
\begin{align*}
g_1^{s_2} \cdot R^{-s_1}\cdot B^{-c} &= \\
g_1^{t_2} \cdot  R^{-t_1} \cdot  g_1^{c \cdot r} R^{- c \cdot \sk_u} \cdot B^{-c} &= \\
T_1 \cdot (g_1^{c \cdot r} R^{- c \cdot \sk_u}) \cdot (g_1^{r} \cdot R^{-\sk_u})^{-c} &= \\
T_1 \cdot (g_1^{r} \cdot R^{-\sk_u} )^{c} \cdot (g_1^{r} \cdot R^{-\sk_u})^{-c} &= T_1
\end{align*}
and
\begin{align*}
D^{s_1} \cdot \nym^{-c} &= \\
D^{t_1} \cdot D^{c \cdot \sk_u} \cdot \nym^{-c} &= \\
T_2 \cdot \hash(\bsn)^{c \cdot \sk_u} \cdot \hash(\bsn)^{- c \cdot \sk_u} &= T_2.
\end{align*}
Furthermore, 
\begin{align*}
e(R, \gpk) = e(g_1^{r/(\sk_\iss + \sk_u)}, \gpk) = e(g_1^{r \cdot \gsk/(\sk_\iss + \sk_u)}, g_2)
\end{align*}
and since 
$
\frac{\sk_\iss}{\sk_\iss + \sk_u} = \frac{\sk_\iss + \sk_u - \sk_u}{\sk_\iss + \sk_u} =  1 - \frac{\sk_u}{\sk_\iss + \sk_u}$,
we have
\begin{align*}
e(g_1^{r \cdot \sk_\iss/(\sk_\iss + \sk_u)}, g_2) &= \\
 e(g_1^r \cdot g_1^{-r \cdot \sk_\iss/(\sk_\iss + \sk_u)}, g_2) &= \\
  e(g_1^r \cdot R^{-\sk_u}, g_2) = e(B, g_2).
\end{align*}

\end{proof}

\begin{theorem}
There exists a knowledge extractor for Scheme~\ref{scheme:daa-without-GT}. 
\end{theorem}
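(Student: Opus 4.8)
The plan is to establish the extractor via the standard \emph{special soundness} of the underlying $\Sigma$-protocol together with a rewinding (forking) argument, since Scheme~\ref{scheme:daa-without-GT} is a Fiat--Shamir transform of a $\Sigma$-protocol in the random oracle model. Following the Remark in the definition of SoK, the extractor $\extractSoK$ will not rely on an algebraic trapdoor $\tau$ but will instead control the random oracle $\hash$ and rewind the adversary. Concretely, I would first show how to recover the witness $(\alpha,\beta)=(\sk_u,r)$ from any two accepting transcripts that share the same commitment $(R,B,T_1,T_2)$ and differ in their challenge, and then explain how such a pair of transcripts is produced.

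For the special-soundness core, suppose we are given two accepting transcripts on the same statement $(\gpk,D,\nym,R,B,m)$, namely $(c,s_1,s_2)$ and $(c',s_1',s_2')$ with $c\neq c'$, both reconstructing the same committed values $T_1,T_2$ in the verification equations. From the $\tilde T_2$ equation I get $D^{s_1}\nym^{-c}=D^{s_1'}\nym^{-c'}$, hence $D^{s_1-s_1'}=\nym^{c-c'}$; since $c-c'$ is invertible in the field $\ZZ_p$, I set $\alpha\leftarrow (s_1-s_1')/(c-c')$, so that $\nym=D^{\alpha}$. From the $\tilde T_1$ equation I get $g_1^{s_2}R^{-s_1}B^{-c}=g_1^{s_2'}R^{-s_1'}B^{-c'}$, hence $g_1^{s_2-s_2'}R^{-(s_1-s_1')}=B^{c-c'}$; setting $\beta\leftarrow (s_2-s_2')/(c-c')$ and substituting the expression for $\alpha$ yields $B=g_1^{\beta}R^{-\alpha}$. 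Thus $(\alpha,\beta)$ is exactly a witness for the proven relation. I would additionally observe that the verifier's pairing check $e(R,\gpk)=e(B,g_2)$ forces $B=R^{\sk_\iss}$, so that $R^{\sk_\iss+\alpha}=g_1^{\beta}$ and $A\leftarrow R^{1/\beta}=g_1^{1/(\sk_\iss+\alpha)}$ recovers a valid Boneh--Boyen certificate on $\alpha$, i.e.\ the full DAA witness, which is what the security reduction ultimately needs.

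To turn special soundness into the extractor demanded by the SoK definition, I would invoke the forking lemma. The extractor runs the adversary once; since the forged signature verifies, with overwhelming probability its challenge $c$ equals $\hash$ evaluated at the query $(\gpk,D,\nym,R,B,T_1,T_2,m)$, as otherwise the adversary would have guessed an unqueried oracle output (probability $1/p$). The extractor records this query, rewinds the adversary to the moment of that query with the random tape fixed up to that point, and reprograms $\hash$ to return a fresh uniform challenge $c'$. Because the query string, and therefore $R,B,D,\nym,T_1,T_2,m$, is identical across both runs, a second accepting output gives precisely the transcript pair consumed above. The main obstacle is the probabilistic bookkeeping: the general forking lemma only guarantees a second successful run with probability roughly $\mathrm{acc}\cdot(\mathrm{acc}/q_H-1/p)$, so I must argue this stays non-negligible whenever the adversary's success probability $\mathrm{acc}$ is non-negligible, and I must absorb the degenerate event $c=c'$ (probability $1/p$) under which the division is undefined. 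I would finish by packaging this rewinding procedure as $\extractSoK$ and verifying that it meets the extraction bound $1-\epsilon(\lambda)$ of the SoK definition, with the residual $\epsilon$ collecting the forking-failure and collision terms.
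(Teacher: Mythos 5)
Your proposal takes essentially the same route as the paper: special soundness of the underlying $\Sigma$-protocol via rewinding with a reprogrammed random oracle, computing $\alpha=\Delta s_1/\Delta c$ and $\beta=\Delta s_2/\Delta c$ and then using the pairing check to recover the Boneh--Boyen pair; the paper simply omits the forking-lemma bookkeeping that you spell out. One small remark: your extracted witness $(\hat{A},\hat{\sk_u})=(R^{1/\beta},\alpha)$ is the correct one (it is consistent with $\nym=D^{\alpha}$ and with the honest signer, for whom $\beta=r$ and $\alpha=\sk_u$), whereas the paper's final line states $\hat{\sk_u}=\alpha\cdot\beta^{-1}$, which appears to be a slip.
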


\begin{proof} 
By utilizing the rewinding technique, having 
two tuples $(s_1, s_2', c)$ and $(s_1', s_2', c')$
both satisfying the verification equations we
may simply compute  
$\hat{A}$ and $\hat{\sk_u}$ such that
 $\nym = \hash(\bsn)^{\hat{\sk_u}}$
and $e(\hat{A}, \gpk \cdot g_2^{\hat{\sk_u}}) = g_T$.
Denote $\Delta s_i = (s_i - s_i')$ and $\Delta c = (c - c')$.
Since, both $c$ and $c'$ are obtained from the random oracle
with the same input, both tuples need to
give the same $\tilde{T_1}$ and $\tilde{T_2}$.
So we have that
\begin{align*}
\tilde{T_1} &\leftarrow  g_1^{s_2} \cdot R^{-s_1}\cdot B^{-c} = g_1^{s_2'} \cdot R^{-s_1'}\cdot B^{-c}~~\text{and} \\
\tilde{T_2} &\leftarrow  D^{s_1} \cdot \nym^{-c} = D^{s_1'} \cdot \nym^{-c'}.
\end{align*}
and what follows
\begin{align*}
g_1^{\Delta s_2} \cdot R^{- \Delta s_1}  &= B^{\Delta c} ~~\text{and} \\
D^{\Delta s_1}  &= \nym^{\Delta c}.
\end{align*}
So we may compute
$\alpha = \Delta s_1 /\Delta c$ and
$\beta = \Delta s_2/\Delta c$ which
satisfy 
$B = g_1^\beta \cdot R^{-\alpha}$ and $\nym = \hash(\bsn)^{\alpha}$.

Moreover from the verification equation we have that: 
\begin{align*}
&e(R, \gpk) = e(B, g_2) = e(g_1^\beta \cdot R^{-\alpha}, g_2) = \\
&e(g_1,g_2)^{\beta} \cdot e(R,g_2)^{-\alpha}.
\end{align*}
It follows that:
\begin{align*}
e(R, g_2)^{\sk_\iss} \cdot e(R, g_2)^{\alpha} = e(g_1, g_2)^{\beta}.
\end{align*}
and
\begin{align*}
e(R^{\beta^{-1}}, g_2^{\sk_\iss} \cdot g_2^{\alpha \cdot \beta^{-1}}) = e(g_1, g_2).
\end{align*}
Hence we obtain $(\hat{A}, \hat{\sk_u}) = (R^{\beta^{-1}}, \alpha \cdot \beta^{-1})$ 
which satisfies $\nym = \hash(\bsn)^{\hat{\sk_u}}$
and $e(\hat{A}, \gpk \cdot g_2^{\hat{\sk_u}}) = g_T$.
\end{proof}

\begin{theorem}
Assuming $\hash$ is a random oracle
and
given $\pp$, $\bsn$, $\nym \in \GG_1$ and $\gpk', \gpk \in \GG_1 \times \GG_2$ there exists
a simulator which generates values $(R, B, c, s_1, s_2)$ 
 indistinguishable from a signature returned by the original signing procedure 
 of scheme~\ref{scheme:daa-without-GT}.\end{theorem}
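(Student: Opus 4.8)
The plan is to give the standard honest-verifier zero-knowledge simulation for a Fiat--Shamir transformed $\Sigma$-protocol, exploiting the programmability of the random oracle $\hash$. The idea is the usual one: rather than committing first and answering the challenge last, the simulator samples the challenge and the responses first, reconstructs the commitments so that the verifier's restoration equations hold by construction, and then programs $\hash$ to return the pre-chosen challenge on the corresponding input. The only scheme-specific wrinkle is that the transcript also carries the pairing-checked group elements $R, B$, which must be produced so that $e(R, \gpk_2) = e(B, g_2)$ holds; since the simulator is handed $\gpk_1 = g_1^{\sk_\iss}$, it can manufacture such a pair without knowing $\sk_\iss$.

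Concretely, first I would sample $c, s_1, s_2 \rand \ZZ_p$ uniformly and independently. To obtain a consistent pair $(R,B)$ I would draw $\rho \rand \ZZ_p$ and set $R \leftarrow g_1^{\rho}$ and $B \leftarrow \gpk_1^{\rho}$; then $B = g_1^{\rho \cdot \sk_\iss} = R^{\sk_\iss}$, exactly the relation a real signature satisfies, so the check $e(R,\gpk_2) = e(B,g_2)$ passes. Next, setting $D \leftarrow \hash_0(\bsn)$ (or taking the supplied $D$ when $\bsn = \bot$), I would \emph{define} the commitments through the restoration equations,
\begin{align*}
\tilde{T_1} \leftarrow g_1^{s_2} \cdot R^{-s_1} \cdot B^{-c}, \qquad
\tilde{T_2} \leftarrow D^{s_1} \cdot \nym^{-c},
\end{align*}
and program the oracle by $\hash(\gpk, D, \nym, R, B, \tilde{T_1}, \tilde{T_2}, m) := c$. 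The simulator finally outputs $(R, B, c, s_1, s_2)$, which by construction satisfies both the hash check and the pairing check of $\verifySoK$.

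It remains to argue indistinguishability from a genuine signature, which I would do by comparing the two joint distributions componentwise. In a real signature $r$ is uniform in $\ZZ_p$, hence $R = g_1^{r/(\sk_\iss+\sk_u)}$ is uniform in $\GG_1$ (the exponent $1/(\sk_\iss+\sk_u)$ being a fixed nonzero scalar), while $B = R^{\sk_\iss}$ is a deterministic function of $R$; in the simulation $R = g_1^{\rho}$ is likewise uniform and $B = R^{\sk_\iss}$ obeys the same relation. Conditioned on $R$ and on the challenge, the real responses $s_1 = t_1 + c \cdot \sk_u$ and $s_2 = t_2 + c \cdot r$ are uniform and independent because $t_1, t_2$ are, and $c$ itself is uniform as a random-oracle output; these are precisely the marginals produced by the simulator. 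Thus the simulated tuple is identically distributed to a real one, \emph{provided} the oracle can be programmed.

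That proviso is the main obstacle, and is where the random-oracle assumption is essential. The reprogramming $\hash(\cdots) := c$ is sound only if the point $(\gpk, D, \nym, R, B, \tilde{T_1}, \tilde{T_2}, m)$ had not already been assigned a value by a prior query (or a prior simulated signature). Since $R = g_1^{\rho}$ is freshly and uniformly chosen, this tuple is new except with probability at most $q/p$, where $q$ bounds the number of oracle queries, which is negligible. Ruling out this collision/abort event — and, in the multi-query setting underlying the simulatability definition, bounding the accumulated abort probability across all signing queries — is the only delicate point; the remaining checks that $\tilde{T_1}, \tilde{T_2}$ recompute correctly are the routine algebra already exhibited in the completeness proof.
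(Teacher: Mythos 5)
Your simulation is the standard one and matches the paper's proof essentially step for step: sample $c,s_1,s_2$, build a consistent pair $R=g_1^{\rho}$, $B=\gpk_1^{\rho}=R^{\sk_\iss}$ so the pairing check holds, define the commitments via the restoration equations, and program $\hash$ to return $c$. Your componentwise distribution argument is equivalent to the paper's (which rewrites $R$ as $g_1^{r/(\sk_u+\sk_\iss)}$ for an implicit $r$), and your explicit treatment of the oracle-programming collision probability is a small point the paper leaves implicit rather than a different route.
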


\begin{proof}
Let us denote $D \leftarrow \hash(\bsn)$.
First the simulator chooses $\hat{r} \in \ZZ_p$ at random
and computes
$R \leftarrow g_1^{\hat{r}}$ and $B \leftarrow \gpk'^{\hat{r}}$.
At this point it is easy to see that
$e(R, \gpk) = e(B, g_2)$, thus these values meet the verification equation.
Next we will show, that these values are distributed
exactly as in a real signature.
We can write $R = g_1^{\hat{r}} = g_1^{r/(\sk_u + \sk_\iss)}$, for some (unknown) $r \in \ZZ_p$ and (unknown)
$\sk_u \in Z_p$ such that $\nym = D^{\sk_u}$.
Then, the values $B$ looks as follows $B = \gpk'^{\hat{r}} = g_1^{\sk_\iss \cdot \hat{r}} = g_1^{\sk_\iss \cdot r/(\sk_u + \sk_\iss)}$.
Now, it is easy to see that $r \cdot \sk_\iss/(\sk_u + \sk_\iss) = r \cdot (\frac{(\sk_u + \sk_\iss) - \sk_u}{\sk_u + \sk_\iss}) = r \cdot (1 - \sk_u/(\sk_u + \sk_\iss))$.
Therefore, we have $B = g_1^{r \cdot (1 - \sk_u /(\sk_u + \sk_\iss))} = g_1^r \cdot g_1^{ r \cdot \sk_u /(\sk_u + \sk_\iss)} = g_1^r \cdot R^{\sk_u}$, what is exactly the same as in a real signature.

What remains is to compute the t-values.
So, the simulator first 
chooses $c, s_1, s_2 \in \ZZ_p^3$ and
computes
\begin{align*} 
T_1 &\leftarrow   g_1^{s_2} \cdot R^{s_1}\cdot B^{-c}~.
\\
T_2 &\leftarrow  D^{s_1} \cdot \nym^{-c}~.
\end{align*}
Finally, the simulator programs the
random oracle to output $c$ when queries
on $(\gpk$, $D$, $\nym$, $R$, $B$, $T_1$, $T_2$, $m)$.
\end{proof}

\subsection{ATF boot-loader changes}
In Section~\ref{background:atf}, we have briefly introduced ATF and its bootloader chains. In this section we describe the changes we have implemented to enable communication between the bootloader components and the \ournameplain. Figure~\ref{fig:secureBootProcess} can be helpful as a visual aid for understanding.

After turning on the secondary cores on the cold boot path, the processor kicks in the first stage BL1 of the bootloader (\circledb{1}). Current bootloaders are not implemented such as to be able to communicate with a device like a SIM card and to run a command response protocol. Thus, we have extended all the boot-loaders with the capability to communicate with the SIM card via bus communication.
This modification in ATF makes the \stpm already available to the early BL1 stage (\textbf{A2: \tick}).
The bootloader software is capable of translating \tpm commands to APDU commands, sending them to \stpm, receiving responses, and translating them to a meaningful response that can be used to make decisions (e.g., failed/successful PCR extension commands). One thing that needed to be addressed here is that except for BL3-3, all bootloaders are secure mode software (i.e., secure world in TrustZone). So during execution,  \stpm has to be initialized as secure mode hardware to be available to the bootloader. We initialize the \stpm as a secure mode hardware, but after a successful boot chain verification, we switch \stpm to normal mode (of TrustZone). This allows us to maintain normal efficiency in the normal world, since the SIM card functionality (e.g., calls or text messages) is accessed by Android and switching context from normal world to secure world every time before accessing the SIM card in Android can interrupt the normal world execution and would be highly inefficient.

\subsection{Bootstrapping trust for movable \stpm}
\label{bindingToTheDevice}

Parno~\cite{Parno:2008:BTT:1496671.1496680} was first to identify the problem of how to bootstrap trust into a hardware TPM and the possibility of \emph{cuckoo attacks}. A fundamental problem of TPM is that the verifier (e.g., local user) does not know if they are talking to the \emph{intended} (e.g., local) TPM, just that they are talking to \emph{a genuine} TPM. In a cuckoo attack, an attacker that compromised the local platform can exploit this problem and fool the verifier into trusting the compromised platform: the attacker simply relays the verifier's communication to another (remote) TPM on an attacker-controlled platform, which then can attest an arbitrary, trustworthy state to the verifier. The preferred solutions to prevent cuckoo attacks are hardwired channels via a special purpose hardware interface to the on-board TPM or, alternatively, a cryptographically secured verifier-TPM communication where the verifier has knowledge of the public key of the TPM on the intended platform.

However, those solutions make an implicit assumption: Historically TPMs are soldered onto the motherboard, eliminating the issue of ensuring proper binding to the device's root of trust of measurement (RTM), usually in form of an immutable piece of trusted code in the BIOS. Due to this static design a TPM is ensured that the very first received measurement in a chain-of-trust is coming from a trusted, local RTM. Only a sophisticated hardware attack can break this binding. A TPM that is by-design movable, such as our \stpm or the PCI-attached secure co-processor for vTPM~\cite{BergerCGPSD06}, raises an interesting question about how to re-establish this bond between TPM and RTM.

\paragraph*{Lack of chain-of-trust:} Without binding the TPM to a trusted, local RTM, the measurements of any authenticated boot cannot be trusted. An adversary could simply plug the \stpm into an attacker-controlled platform and replay\footnote{The TPM is a passive device to which the measurements have to be provided by its caller.} any desired measurements sequence, i.e., create arbitrary PCR values akin to a TPM reset attack~\cite{TPMHWattack,217652}. This allows the attacker to fool a remote verifier during remote attestation but also to gain access to sealed secrets, whose release is bound to the platform state (i.e., PCR values).

\paragraph*{Binding \stpm and RTM:}

To create a binding between the \stpm and a trusted, local RTM, we need the \stpm to 1)~authenticate the RTM to ensure its a trusted code (e.g., BL1 of ATF); and to 2)~ensure policies (e.g., for data release) and commands (e.g., attestation) are only executed for exactly the platform for which the \stpm stores the measurements. To address those challenges, we identified two possible solutions, using the device's TEE as a proxy to the \stpm 
or using distance bounding protocol. 

\begin{figure}[t]
	\centering
	\includegraphics[width=.90\linewidth]{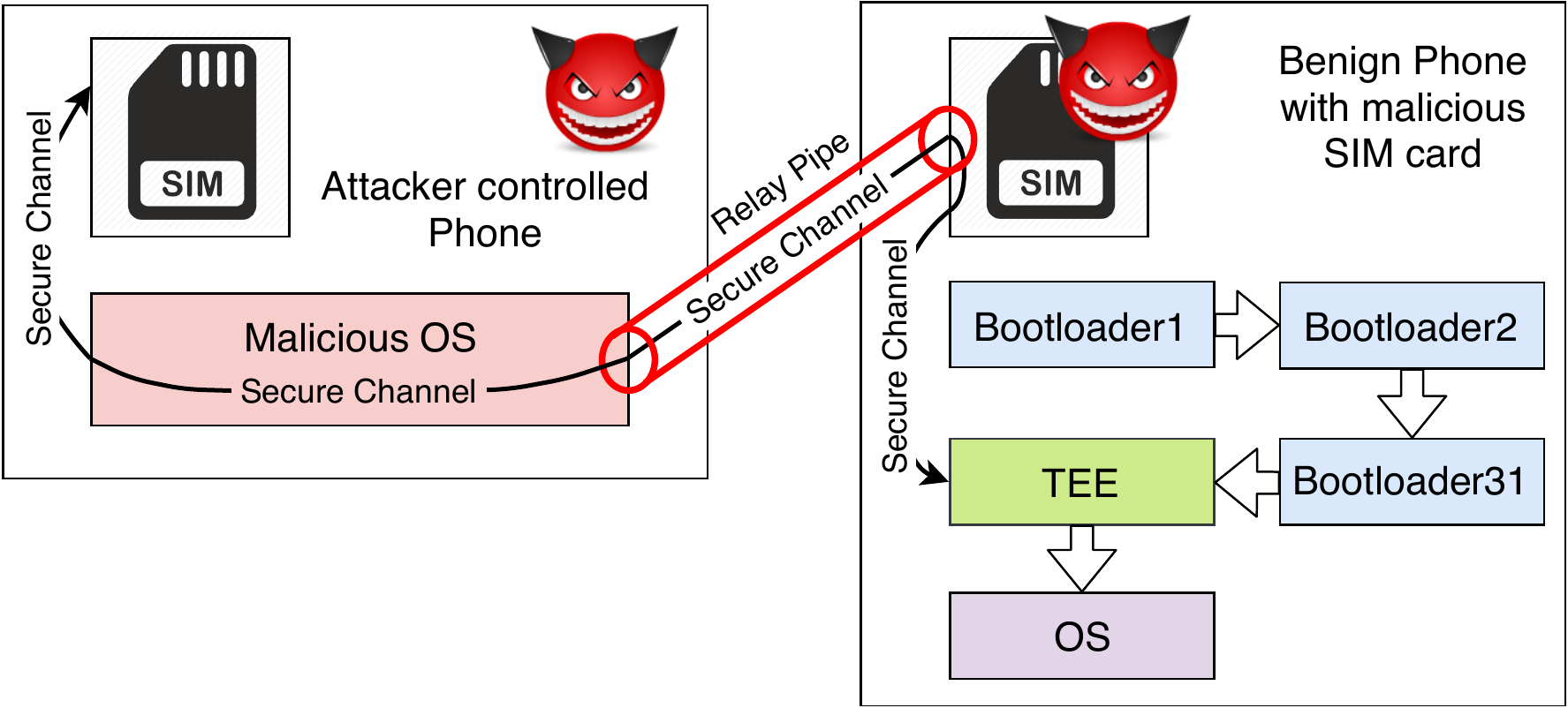}
	\caption{Using TEE as TPM proxy to bind \stpm with RTM and to mitigate the effects of relay attacks.}
	\label{fig:relayAttackTEE}
\end{figure}

\subsubsection*{Using TEE as TPM proxy:}
One way to bind the \stpm with the device's RTM is by leveraging the platform security building blocks of mobile devices and using the TEE as a proxy to \stpm (see Figure~\ref{fig:relayAttackTEE}). On a genuine device with secure boot in place, i.e., BL1 as a trusted RTM, the TEE has exclusive access to device-specific credentials that are certified by the device vendor. Using those credentials, the \stpm and TEE can establish a secure end-to-end channel. In this setup, \stpm will only respond to PCR extensions, attestation requests, or unsealing of encrypted data if the commands come via this secure channel. As a result, an attacker cannot forge arbitrary PCR values without compromising the device-specific key. Further, if the TPM enforces a particular device key, it can ensure that only the intended platform is using the \stpm; however, even without this strict set of device keys, this solution still ensures that any TPM commands, such as releasing data to the host platform, can only come from a genuine mobile platform with an intact secure boot from which it received the measurements.
Considering previously mentioned software-based attacks against TEE (see Section~\ref{relatedWork:fTPM}), an attacker could compromise the TEE to steal the device-specific key and impersonate the TEE to the \stpm. This can be alleviated by using session keys instead of the long-term secret device-specific key for communication between TEE and \stpm, which could be setup during the bootstrapping and, hence, before untrusted code can attack the TEE.
%
%
A drawback of this solution is that the \stpm requires the TEE to be bootstrapped to become itself operational, which prevents an early availability of the \stpm. Since the \stpm is not early available in this setup, ATF's secure boot has to be extended to store the measurements of verified software components and pass those measurements on to the TEE, which then can forward them to the \stpm via the secured channel (\textbf{D3:~\kinda}). 
It should be noted that while this extension to ATF would also provide a solution to the early availability of \ftpm~\cite{RajSWACEFKLMNRS16}, \stpm gives a user-centric solution and additional interesting use-cases in comparison to \ftpm (see Section~\ref{sec:useCases}). We discuss an alternative solution based on distance bounding and no need for a TEE-proxy. 

\subsubsection*{Using Distance Bounding Protocol:}

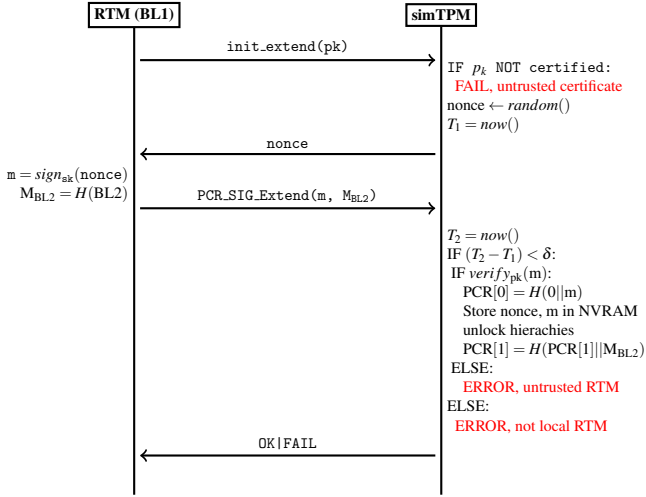
\begin{figure}
	\begin{tikzpicture}[thick,scale=0.6, every node/.style={scale=0.6}]
		\node [draw] (R) {\textbf{RTM (BL1)}};
		\node [draw, right = 3cm of  R] (S) {\textbf{simTPM}};

		\foreach \l in {R, S}
		{\node[below of = \l] (M1\l) {};}
		\draw[->] (M1R) -- (M1S) node[midway,above] {\texttt{init\_extend(pk)}};

		\node[below of = M1R] (L1R) {};
		\node[right, align=left] (L1S) at (L1R -| S){
			\ttfamily
			IF $p_k$ NOT certified:\\
			\indent~~\textcolor{red}{FAIL, untrusted certificate}\\
			$\text{nonce} \leftarrow random()$\\
			$T_1 = now()$\\
		};

		\node(M2S) at (L1S.south west) {};
		\node (M2R) at (M2S -| R) {};
		\draw[<-] (M2R) -- (M2S) node[midway,above] {\texttt{nonce}};

		\node[below of = M2S] (L2S) {};
		\node[left, align=right, yshift=2ex] (L2R) at (L2S -| R){
			\ttfamily
			$\text{m} = sign_{\text{sk}}(\texttt{nonce})$\\
			$\text{M}_{\text{BL2}} = H(\text{BL2})$
		};

		\node(M3R) at (L2R.south east) {};
		\node (M3S) at (M3R -| S) {};
		\draw[->] (M3R) -- (M3S) node[midway,above] {\texttt{PCR\_SIG\_Extend(m, $\text{M}_{\text{BL2}}$)}};

		\node[below of = M3R] (L3R) {};
		\node[right, align=left, yshift=-12ex] (L3S) at (L3R -| S){
			\ttfamily
			$T_2 = now()$\\
			IF $(T_2-T_1)<\delta$:\\
			\indent~IF $verify_{\text{pk}}(\text{m})$:\\
			\indent~~~~$\text{PCR}[0] = H(0||\text{m})$\\
			\indent~~~~Store nonce, m in NVRAM\\
			\indent~~~~unlock hierachies\\
			\indent~~~~$\text{PCR}[1] = H(\text{PCR}[1]||\text{M}_{\text{BL2}})$\\
			\indent~ELSE:\\
			\indent~~~~\textcolor{red}{ERROR, untrusted RTM}\\
			ELSE:\\
			\indent~~\textcolor{red}{ERROR, not local RTM}\\
		};

		\node(M4S) at (L3S.south west) {};
		\node (M4R) at (M4S -| R) {};
		\draw[<-] (M4R) -- (M4S) node[midway,above] {\texttt{OK|FAIL}};

		\foreach \l in {R, S}
		{\node[below of = M4\l] (bot\l) {};}
		\foreach \l in {R, S}
		{\draw (\l) -- (bot\l);}

	\end{tikzpicture}

	\caption{Prototypical distance bounding protocol for binding local RTM (BL1) and \stpm}
	\label{fig:distanceBounding}
\end{figure}

Another way to bind the \stpm with its RTM is by using a distance bounding (DB) protocol~\cite{brandsdistance,bengiosecure,beth1990identification}. Distance bounding is widely used for card-based payment systems. When a credit card is punched to the card reader, the reader runs a distance bounding protocol to check the proximity of the card to prevent a possible relay attack. We are facing the opposite scenario, in which the card is trying to assert the proximity of the device where the communication partner, here the RTM, resides.

\paragraph*{Prototypical distance bounding:} We assume, the device vendors equipped the \stpm with certificates for their device-specific keys, which allows a verifier to distinguish trusted code with access to such secrets (e.g., early bootstages, like BL1, or the TEE) from untrusted code, like the host OS or regular apps. To assert the proximity of the RTM, only the very first measurement provided to the \stpm, i.e., the measurement by the BL1 (RTM) of BL2, has to be checked for proximity. After that, the chain of trust of an authenticated boot will transitively extend this trust into the locality of the RTM. Figure~\ref{fig:distanceBounding} illustrates a prototypical protocol for our scenario. We consider a two-step PCR extension by the RTM for verifying the proximity. First, the RTM provides the public key \emph{pk} of its device-specific key (or a key derived from it) to the TPM, which then can verify the authenticity of the RTM using the vendor-supplied certificate. Afterwards, as in other distance bounding protocols, the \stpm (verifier) challenges the RTM (prover) with a nonce to which the RTM replies with the signed nonce value (using the authenticated private key) as well as the PCR extension arguments. If this reply of the signed nonce is received within a time threshold $T$ and the signature verifies, \stpm assumes the RTM to be local and extends the PCR with the supplied measurement value $M_{BL2}$; if either condition fails, the \stpm aborts. For robustness of the protocol, the challenge-response can be repeated $N$ times to decrease the chances of a legitimate, local RTM failing the threshold.

\paragraph*{Prototypical setup:} In general, calculating the threshold for distance bounding is difficult, because various factors can influence the response time. For instance, jitters of the network over which the verifier and prover communicate, interrupts of the prover's computation, cache and memory delays, etc. might introduce a high uncertainty of the expectable response time. At first glance, our particular scenario seems very favorable for a distance bounding protocol, since the prover (RTM) is the BL1 and hence has exclusively control of the CPU without interrupts or interference of an OS; and the RTM is connected to the SIM card over a 480~mbps USB~2.0 bus, in modern devices even via a USB~3.0 bus with 5~gbps, with no parallel transfers, providing favorable circumstances for a challenge-response protocol and small error-margin in which an attacker has to fall for a successful, undetected relay attack~\cite{drimer2007keep,cremers2012distance,mauw2018distance}.

The SIM card is connected to the phone through a reader, which is directly connected to the baseband processor. The reader powers the smart card and provides it with the baseband's clock. The clock duty cycle shall be between 40\% and 60\% of the period during stable operation~\cite{ETSITS151011}. Modern smart cards support clock stop to allow preservation of power, which an attack could use to tamper with the verifier's perception of time. However, this feature can be disabled by initializing the card as clock stop not allowed by setting the \textit{VERIFY CHV} command to 0. Disabling this feature will increase the phone's battery consumption, but not in a significant amount, since the maximum current consumption of an idle SIM card should not exceed $200 \mu A$.

The SIM card and the reader connection are in a contact connection and generally interfaces within $20ns$~\cite{BCT4303,LTC4558}. The reader connects to the baseband processor through Non-Level-Shifted bidirectional I/O (see Figure~\ref{fig:siminterface}). The connection in our test setup goes through an USB 2.0 bus with 480~mbps. The baseband processor connects to the CPU with a system bus (varies between a USB 2.0 bus and USB 3.0 bus) with a speed of at least 480~mbps. Communication between SIM card and the CPU via this bus interface lies within a range of $35ns$ to $72ns$.

\begin{figure}[t]
	\centering
	\includegraphics[width=1.0\linewidth]{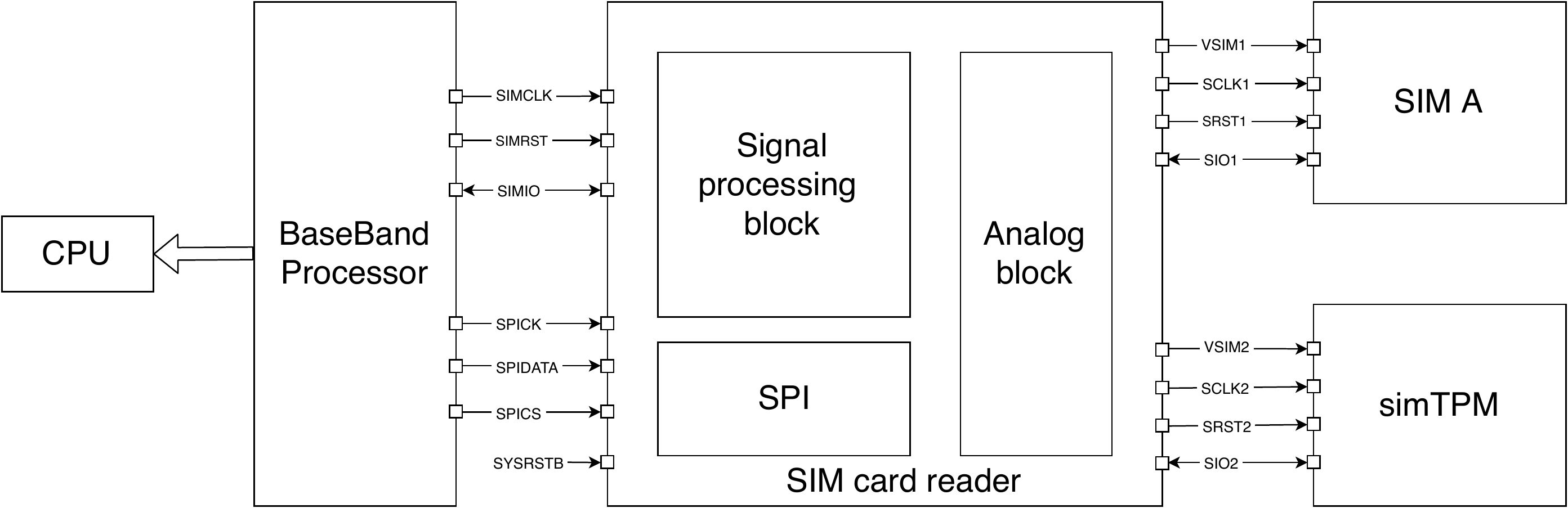}
	\caption{SIM card interfacing with the phone. Communication with the SIM card goes through the SIM card reader to the baseband processor to the CPU}
	\label{fig:siminterface}
\end{figure}

\begin{figure}[t]
	\centering
	\includegraphics[width=1.0\linewidth]{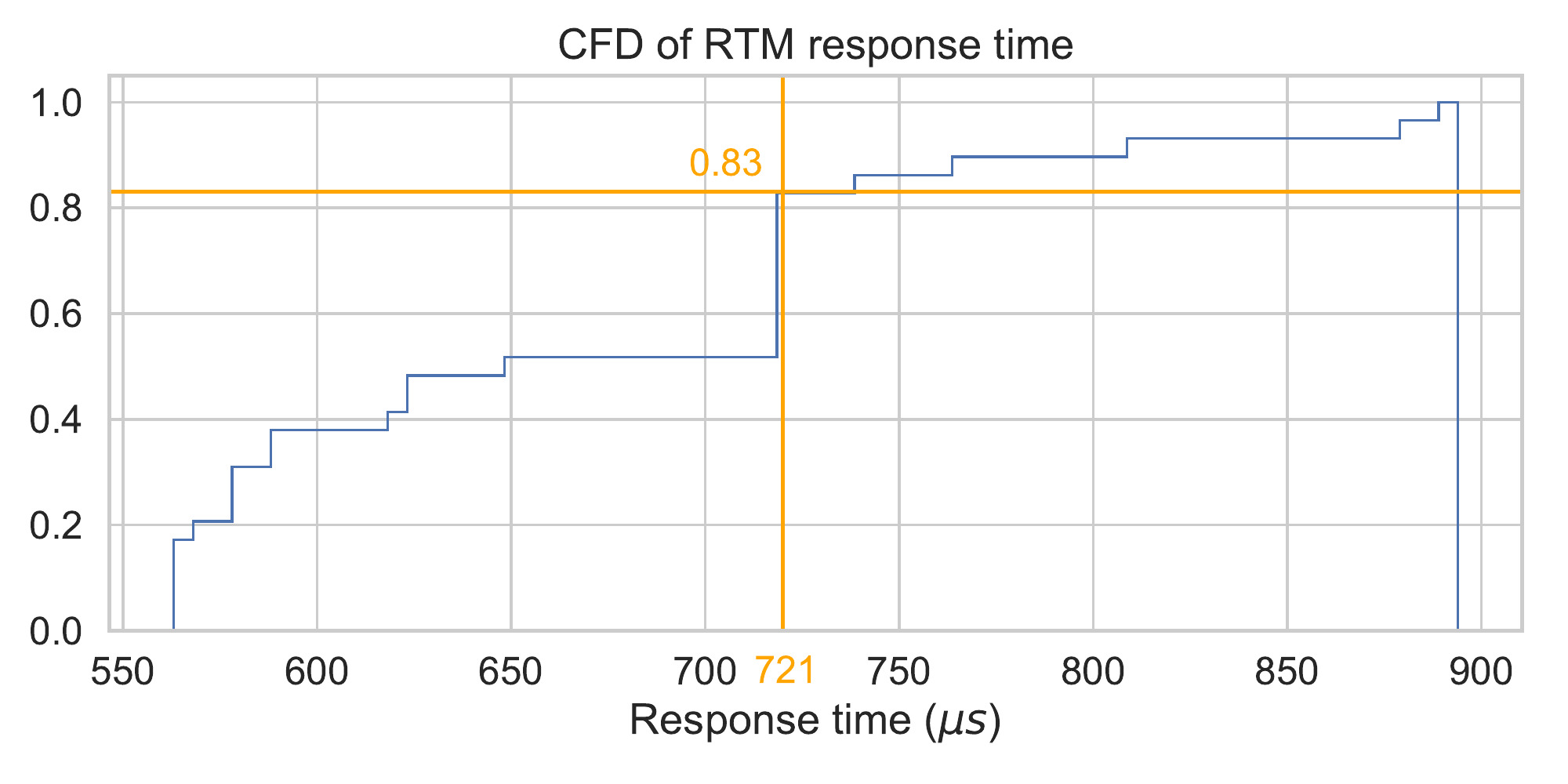}
	\caption{Cumulative frequency distribution of the RTM response time in our measurements ($N=30$)}
	\label{fig:cfd_rtm}
\end{figure}

\paragraph*{Measurements and setting threshold:} We conducted a series of measurements on our test device to evaluate the feasibility of distance bounding to bind the RTM and \stpm. We measured 30 times\footnote{A single measurement requires $\approx$5min, since only a single measurement per power-cycle is possible on our test device.} the speed of the prover (RTM) for calculating the response to the challenge (64~bits nonce) using ECC with the NIST P-256 curve. In our test, the responses took between $563 \mu s$ and $894\mu s$, and the average response time was $669.759 \pm 49.804 \mu s$ for a confidence level of 99\%. Figure~\ref{fig:cfd_rtm} shows the CFD of the RTM response time, where 83\% of all responses were $\leq721\mu s$ and 93\% of all responses were $\leq812\mu s$. From our dataset the success chance of the distance bounding protocol $P_{DB}$ for a single round is the cumulative probability sampled over the frequency distribution in Figure~\ref{fig:cfd_rtm}. If were to set the threshold $T$ for successful distance bounding to 721 $\mu s$:

\begin{center}
	$P_{DB}$ = Pr[x $\leq$ 721] = $\sum_{i=563}^{721}$ Pr[$x=i$] $\approx$ 0.83
\end{center}
where 563 $\mu s$ is the lowest latency in our dataset. Going below 721 $\mu s$ reduces the probability of a successful bounding protocol for legitimate devices, i.e., to 0.52 for a threshold of 649$\mu s$. To increase the chances for local RTM to pass the distance bounding check, a successful verification usually requires that the response is below $T$ for a sufficient fraction $f$ of the responses, means at-least $f \times n$ out of $n$ responses should arrive within $T$. When modeling the challenge-response game as binomial distribution and requiring $f \times n$ responses within 721 $\mu s$ out of $n$ responses (i.e., success probability $P_{DB} = 0.83$), the cumulative probability distribution is:

\begin{center}
	Pr[$x \geq fn$] = $\sum_{i=fn}^{n}$ $\left({\begin{array}{c} n \\ i \end{array}}\right) (p)^i (1-p)^{n-i}$ where $p \in \{P_{DB}\}$
\end{center}

Figure~\ref{fig:bernoulli_cdf} shows the success probabilities for different choices of $f$ and $n$. An optimal choice minimizes $n$ (lower overall runtime overhead for the protocol) while maximizing Pr[$x \geq fn$] and minimizing the chance of the attacker to successfully relay. We have observed from our dataset that setting $f$ = 0.47 for $n$ = 30 (i.e., 14 out of 30 runs) offers a success rate Pr[$x \geq fn$]=0.99999724049 for local RTM.

\begin{figure}[t]
	\centering
	\includegraphics[width=1.0\linewidth]{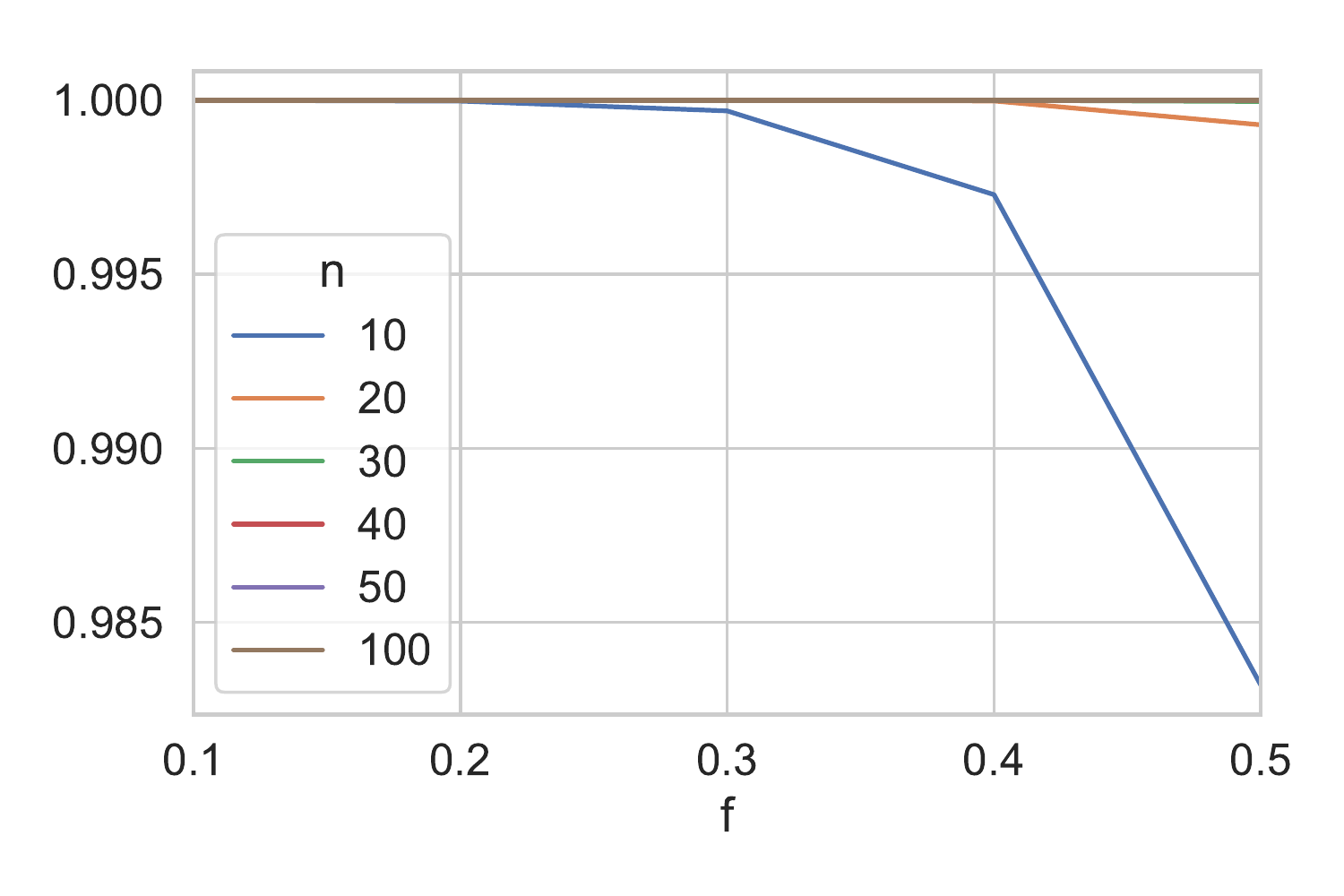}
	\caption{Success probability of local RTM for distance bounding depending on $f$ and $n$ for $T=721\mu s$ ($p=0.83$)}
	\label{fig:bernoulli_cdf}
\end{figure}

\paragraph*{Attacker chances:} The APDU package for the challenge is 112~bits and for the response 304~bits, which are transferred virtually instantly between verifier and prover~($\leq1\mu s$). Thus, the response time measured in Figure~\ref{fig:cfd_rtm} consists virtually only of the processing time of the RTM, which an attacker cannot speed up (see Figure~\ref{fig:relayAttackTEE}). As a consequence, if an attacker requires more than $721-563=158 \mu s$ to relay the challenge and the response, the relay attack has no chance of winning, since the RTM in our tests required at least $563 \mu s$ to compute the response. Assuming a packet size of 55~bytes (minimal Ethernet frame size, IP header, and UDP package with 1~byte payload for the nonce/response), the attacker needs at least a relay bandwith of $\approx 5.87$ mbps to have \emph{any} chance of winning, which is a very reasonable assumption. Hence, attacks against this distance bounding are feasible. From our measurements it is hard to concretely model the attacker, however, the attack chance is already 0.1\% when relaying via Ethernet and an IP network (55~bytes datasize) with a bandwith of $\approx 49$ mbps, or when relaying only the APDU data of 14~bytes (e.g., via a custom build connection) with $\approx 10$ mbps.


\subsection{Security analysis}

Lastly, we analyze the security of \ournameplain in comparison to the closest solutions \ftpm and hardware TPM, specifically considering the deployment of our TPM on a SIM card.

\paragraph*{Off-chip protection:} As mentioned in Section~\ref{sec:reqanalysis}, \ftpm depends on the integrity of the secure world, which has been under attack recently~\cite{bootstomp,MachiryGSSSWBCK17,TZDowngrade,ExploitingTZTEE,bitsPlease,BHTZ,BHRose,nailgun:sp19}. Our \stpm implements an off-board \tpm on the SIM card and, like a discrete TPM, is physically isolated from untrusted code. This provides a stronger protection of the \stpm's trusted computing base, however, we cannot fully exclude potential software attacks against the SIM card software. For instance, in the past smart cards have exhibited bugs~\cite{MostowskiP08} like hidden commands, buffer overflows, weaknesses of cryptographic protocols~\cite{KarstenBH13}, or malicious applets~\cite{KarstenBH13}. Further, like a hardware TPM, \stpm is connected via a bus, which makes it prone to advanced bus attacks~\cite{conf/uss/Kauer07, TPMHWattack,tpmgenie} that, however, are considered outside the attacker model for consumer grade hardware like the TPM.

\paragraph*{SIM card cloning:}

Deployment on a SIM card also raises the concern of card cloning~\cite{gsmcloning}, which could easily enable impersonation attacks or theft of credentials. However, driven by the interests of telecommunication companies, modern SIM cards come with anti-cloning defenses that mitigate this attack vector~\cite{capi}.

\paragraph*{SIM fraud attack:}
The sim swap fraud centers around exploiting a mobile phone operator's ability to seamlessly port a telephone number to a new SIM. This scam begins with a fraudster gathering details about the victim, either by phishing emails, by social engineering or by buying victim's details through organized criminals. Obtaining the details of the victim, the fraudster uses social engineering techniques to convince the telephony company to port the victims number to a fraudulent sim card owned by the fraudster and start receiving all sms's including secure communication (e.g., banking OTP).
In our design the TPM is not dependent on the SIM telephony functionalities. simTPM works as a local co-processor with desirable attributes. An attacker can port the telephony services to a fraudulent SIM card, but not the TPM state, as it is bound to the local SIM-card and would require explicit migration policies to other (SIM)TPM.

\paragraph*{Side-channel attacks:}

To be compliant with the \tpm 2.0 specification, the hardware has to implement cryptographic functions
that are resilient to timing-based side-channel attacks. There exists a similar requirement for smart cards, which are designed to be resistant against various types of side-channel attacks. Thus, \ournameplain immediately benefits from the security features of the underlying smart card.

However, a motivated attacker can easily move \ournameplain to a controlled environment and mount different active side-channel attacks, such as clock frequency, heat measurement, probing~\cite{Kasper:2009:ESA:1698100.1698110}, fault injection~\cite{KommerlingK99}, or power analysis~\cite{Liu:2015:STH:2993033.2993061,MessergesD99,MessergesDS99}. While similar attacks have been shown against ARM TrustZone (e.g., \cite{197209,203864}) and discrete TPM chips~\cite{tpm_extractkey}, deploying the TPM on a removable card might ease mounting those attacks. Nevertheless, it should be noted that such sophisticated hardware attacks are not only strenuous, exorbitant, and inconsistent, but also beyond the protection that a consumer grade security chip can offer.


\section{Performance Evaluation}
\label{sec:evaluation}

\begin{figure*}
	\centering
	\includegraphics[width=\textwidth]{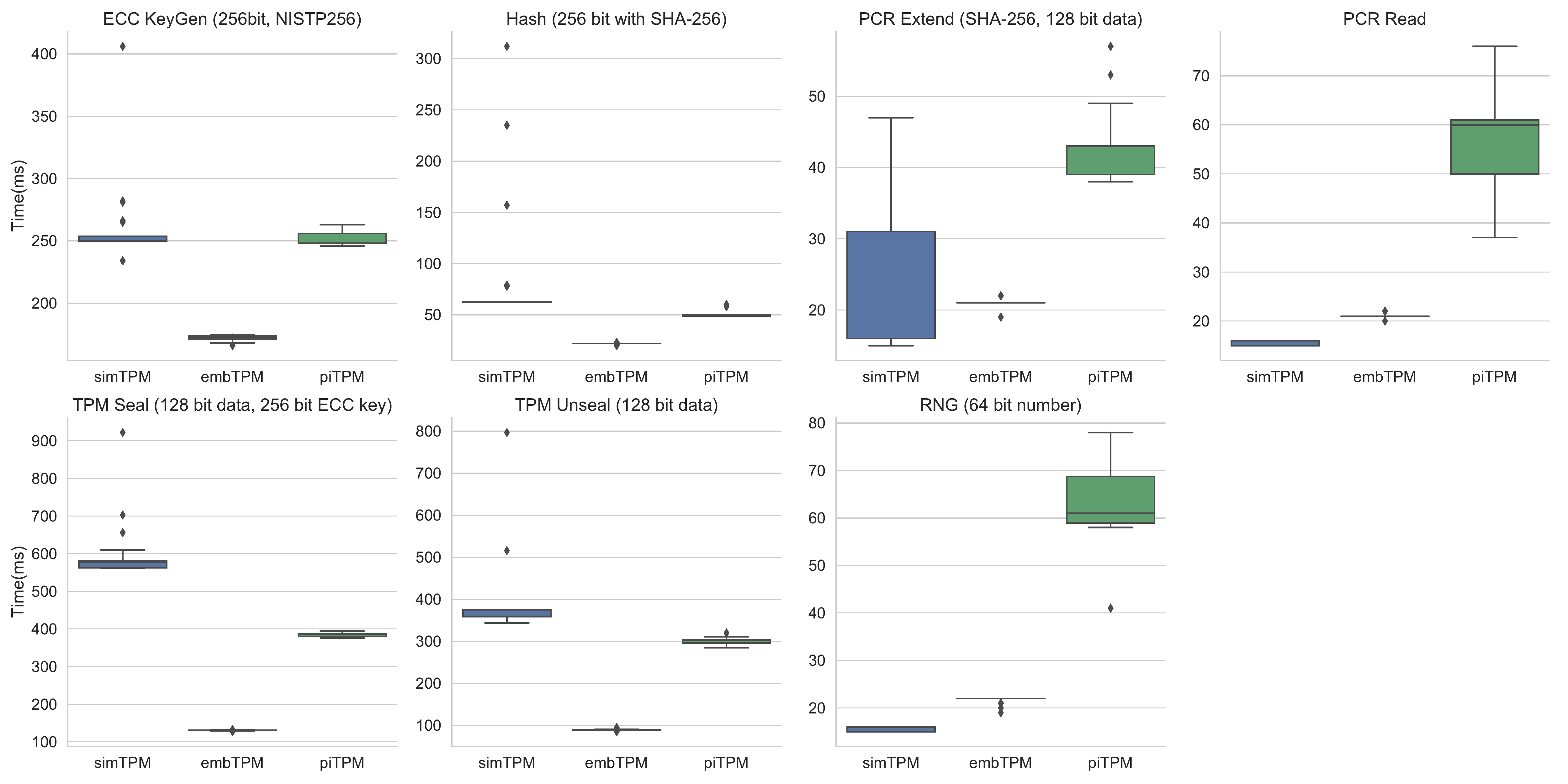}
	\caption{Performance comparison (in $ms$) of different \tpm commands for \stpm and an Infineon SLB 9670 TPM2.0 on a Raspberry-Pi and Lenovo laptop.}
	\label{fig:measurements}
\end{figure*}

We evaluate the performance of \ournameplain on a HiKey960 board in comparison with a hardware \tpm. We focus on the most frequent commands executed by a \tpm, i.e., key generation, sealing/unsealing of data, extending/reading a PCR, generating random bytes, and computing a hash value of an input. Beside \ournameplain we prepared two test setups equipped with an Infineon SLB 9670 TPM chip. One of these two test benches is a plug-able TPM on a Raspberry-Pi (\textit{piTPM}) and the other one is an embedded TPM on a standard Lenovo laptop (\textit{embTPM}). 


The \textit{\atpm} consists of an Infineon TPM SLB 9670 Iridium add-on board for Raspberry Pi with a 900 MHz quad-core cortex-A7 CPU as primary CPU and 1GB RAM. 
In the \textit{\etpm}, the \tpm is embedded on a board equipped with an Intel core i7-7820 2.90GHz CPU and 32 GB RAM. 
The TPM SLB 9670 chip is capable of asymmetric cryptography---ECC, ECC BN-256, ECC NIST P-256, ECC256, ECDH, RSA1024, RSA2048---and symmetric cryptography, like HMAC, SHA-1, SHA-256.

For \ournameplain, the platform runs Android P on a Hikey960 board. The board consists of a Kirin 960 Soc, 
an 8 core ARM big.LITTLE CPU and 3GB RAM. The Android and boot-loader implementations used in those setups and the SIM-card implementation are explained in Section~\ref{sTPMandAndroidChanges}. The hardware setup is detailed in Table~\ref{tab:hardwareSetup}.

We have used a TSS implementation by IBM~\cite{TPM2TSS} to communicate with the Infineon TPM. The results of our benchmarks are summarized in Figure~\ref{fig:measurements}. All results come from 50 measurements per command per device. We report the 95\% confidence intervals.

\subsection{Test cases and results}

\paragraph*{Key generation:}

We measured the time to generate a
256-bit ECC key and output the public part of the key. Our implementation of \stpm creates the key on average in $257\pm 8.03ms$, comparable to the \atpm performance ($253\pm 1.25ms$), but slower than the \etpm ($172\pm 0.61ms$).

\paragraph*{Create hash:}

We measured the time it takes for the TPM to hash 256~bits of input data with SHA-256 and output the digest. \atpm ($50\pm 0.76ms$) and \etpm ($21\pm 0.16ms$) outperform the \stpm ($72\pm 10.13ms$) by a factor of 1.44 and 3.42, respectively.

\paragraph*{Extending and reading a PCR:}

We evaluated the PCR extend and read commands.
The former allows to extend the PCR with a new value, while the latter command is used to read the current value of a PCR. We use SHA-256 as hash algorithm and a 128 bit string as input value. For PCR extension, \stpm ($24\pm 2.66ms$) is on par with \etpm ($21\pm 0.11ms$), however, exhibits a higher instability of the performance. For reading PCRs, \stpm ($15\pm 0.15ms$) is the fastest implementation, followed by \etpm ($21\pm 0.13ms$). \atpm is the slowest implementation in both cases ($41\pm 1.22ms$ and $57\pm 2.58ms$) and exhibits an unstable performance, too.

\paragraph*{Sealing and unsealing data:}

The \tpm seal command takes a byte array, attaches a policy, encrypts it with a TPM storage key, and returns a blob to the caller. When unsealing, the \tpm takes an encrypted blob, checks the policy, and decrypts the blob if the policy is satisfied by the \tpm state. For our performance measurement we used 128~bits input data, a 256-bit ECC sealing key with ECIES, and an empty policy. The \etpm is the fatest solution for sealing and unsealing ($130\pm 0.27ms$ and $89\pm 0.46ms$) and outperforms our \stpm ($588\pm 18.55ms$ and $376\pm 22.30ms$) by a factor of 4.52 and 4.22, respectively.

\paragraph*{Random number generation:} We use the TPM to generate a 64~bit random number. Our \stpm is the fastest solution ($15\pm 0.14ms$), followed by \etpm ($21\pm 0.17ms$) and then \atpm ($63\pm 1.63ms$).

\subsection{Discussion of performance}

Our test results show that there is no clear winner among our test systems. \stpm as well as \etpm excel for some commands and we would argue that our \stpm prototype shows a competitive performance. Unfortunately, the implementation of Infineon SLB 9670 TPM is not publicly available, thus commenting on the exact reasons for those differences would result in speculations. If we would venture to speculate, potential reasons for the differences could be the different communication buses. \etpm has a dedicated bus communication with the onboard processor and a faster processor, while \atpm is running on a Raspberry Pi and is connected over GPIO with lower bandwidth. On the other hand, \stpm is connected through the USB bus. Moreover, our \stpm implementation uses only the publicly available APIs of the smart card OS, which provide only an indirect access to hardware level commands. Hence, a vendor-supported implementation with direct access to the microprocessor would improve in efficiency.

The \ftpm is unfortunately not available, precluding a direct comparison in our test suite; however, our observations for the \etpm speed are comparable to those reported by Raj et al.~\cite{RajSWACEFKLMNRS16}, although it is unclear which hardware TPM they evaluated. An \ftpm, unsurprisingly, outperforms any other tested implementation here---e.g., slowest \ftpm in \cite{RajSWACEFKLMNRS16} was between 2.4--15.12 times faster than the fastest hardware TPM---since it is executed on the ARM Cortex main application processor, whereas discrete TPMs use slower microprocessors, as does our \stpm.

\begin{table*}
	\caption{Hardware setup for performance evaluation}
	\label{tab:hardwareSetup}
	\begin{center}
		\begin{tabular}{| c | c | c | c | c | c | c | c | c |}
			\hline
			\textbf{Setup Alias}     &    \textbf{Baseboard}        &    \textbf{Clock-speed}        &    \textbf{Processor}    &    \textbf{RAM}	&	\textbf{OS}	&	\textbf{TSS}	\\ \hline \hline
			\atpm    &    RaspberryPI 2B    &    900 MHz    &   ARM Cortex A7    &    1 GB	&	Raspbian OS		&	IBM TSS2.0	   \\ \hline
			\etpm   &	  Lenovo 20J6CTO1WW		 & 	  2.90 GHz   &    Intel Core i7-7820  &  32 GB	& 	Ubuntu		&	IBM TSS2.0  \\ \hline
			\stpm &   Hikey960		 &	  903 MHz	 & 	  Kirin960		&  	3 GB	&	Android P	&	Custom   \\ \hline
		\end{tabular}
	\end{center}
\end{table*}

\section{Use Cases}
\label{sec:useCases}

We discuss briefly how \stpm fits into the trusted computing landscape and explain scenarios that are of particular interest when \stpm and \ftpm co-exist. 

\subsection{Multiple stakeholder model}
\label{sec:usecase:msm}

The TPM specifications~\cite{TPM2Spec} as well as the obsolete Mobile Trusted Module~(MTM) specifications~\cite{TCGMTMSpe} acknowledged the fact that a trusted platform might have multiple stakeholders. In particular, mobile platforms are not considered under the full management of the user, but critical mobile network management is the domain of the mobile carrier/network operator and the device vendor has high interest in keeping highest privileged operations (e.g., TEE and OS) under their control. The old and new TCG specifications define recommended capabilities and various implementation alternatives to allow multiple stakeholders to safely coexist. For instance, the MTM specification clearly differentiates between remote stakeholders and local stakeholders, each with their own TPM under their control. This concept is reflected in the recommended capabilities for a mobile TPM2.0~\cite{TCGMTMSpec2}, which advise the isolation between stakeholders and their resources and policy-based authorization of stakeholder sensitive data. To realize this multiple stakeholder model, the reference architecture outlines different implementation alternatives. For instance, multiple TPMs within a protected environment like TEE, or virtual TPMs supported by a hypervisor~\cite{BergerCGPSD06}, where stakeholders are isolated from each other based on the compartmentalization provided by the TEE's trusted OS or the hypervisor, respectively.

Our particular setting also fits well into the defined multiple stakeholder model: two distinct TPMs co-exist, each with a distinct affinity to a different stakeholder. The \ftpm is by design designated to the platform stakeholder (i.e., device manufacturer) and it is bound to the device through the device-specific credentials within the TEE (e.g., eFuses) from which \ftpm derives its endorsement key and to which it anchors its key/storage hierarchies. For instance, the \ftpm described in \cite{RajSWACEFKLMNRS16} is designated entirely to the platform and its services. In contrast, the \stpm is designated to the end-user. This intuition is based on the observation that users use the SIM to authenticate themselves to the mobile network and rather stick to one SIM (i.e., phone number) while changing more frequently the device. Moreover, users have to explicitly authenticate themselves to the SIM card, i.e., their mobile carrier issued PIN.
In this setting we are going beyond the initial proposals by the TCG reference architecture by actually assigning two distinct stakeholders to two \emph{physically} separated TPM instances, SIM card versus TEE. 

\subsection{Switching SIM card or device}
\label{switchingSIMcardOrDevice}
Since the SIM card is removable and exchangeable, two scenarios have to be considered: the user switches devices but keeps the SIM card, or the user keeps the device and switches to a new SIM card. How this affects migration of the user data protected with the \stpm is summarized in Table~\ref{table:useCaseDatMigration} and explained in the following.

\begin{table}[t]
	\caption{Migrating user data when switching SIM card or device}
	\label{table:useCaseDatMigration}
	\footnotesize
	\begin{center}
		\begin{tabularx}{\linewidth}{| c | >{\centering\arraybackslash}X | >{\centering\arraybackslash}X |}
			\hline
			 & Data bound to device & Data \textbf{not} bound to device \\
			\hline
			New SIM card & Key duplication & Key duplication \\ \hline
			New device   & \texttt{TPM\_Authorize} key policy  & --- \\ \hline
		\end{tabularx}
	\end{center}
\end{table}

\paragraph*{Switching device:}
When switching the mobile device and migrating the user data to a new device, the complexity of the operation is dependent on whether the user bound any data to the device. For instance, during secure boot, BL1 has access to device-specific information like the \boardid that uniquely identifies the current platform (or potentially values derived from the device-specific vendor key). This \boardid (like derived values) can be included in the measurements collected during secure boot (see Section~\ref{bindingToTheDevice}) and allow the \stpm to bind data or keys to this particular platform.\footnote{Assuming a bond between the RTM and \stpm was established.} If the user did not bind any data/keys to the platform, no further action is required beyond moving the SIM card to the new phone. The entire \stpm state including the key hierarchy is inherently migrated to the new device and can be used to decrypt the user data---i.e., a form of \emph{portable sealed storage}. If the data is bound to the \boardid, a new feature of TPM2.0 called \lstinline{TPM_Authorize} has to be used to avoid the problem of \emph{"brittle policies."} Without \lstinline{TPM_Authorize}, the user data would be bound to one particular \boardid and could never be decrypted on another device. With \lstinline{TPM_Authorize} different possible \boardid values can be signed off as valid for a successful verification of the platform state and, hence, decryption of data migrated with the SIM card. The valid \boardid values can be signed off by the user to endorse a new phone to which data should be migrated, or by another entity, like the mobile carrier or the user's employer in BYOD settings.

\paragraph*{Switching SIM card:}
\label{sec:usecase:switchingSIM}
If the user switches the SIM card and hence moves to another \stpm, all user data has to be migrated to the new SIM card, i.e., the necessary \stpm keys have to be moved to the new \stpm. Independent of whether the user data is bound to the device or not, switching the SIM card requires the \stpm keys used for securing the data to be duplicated to the new \stpm. This is an example scenario for TPM2.0 key duplication to migrate keys and associated data to another TPM and is supported by \ournameplain.

\medskip
\noindent
The bottom line of those two scenarios is that a user that wants to keep the option to migrate data secured with the \stpm to both new SIM cards and new devices should use duplicable keys with \lstinline{TPM_Authorize}.


\section{Discussion}
\label{sec:discuss}

The \ftpm~\cite{RajSWACEFKLMNRS16} is the incumbent deployment for a TPM on mobile devices and was part of the Windows Phone platform. However, it was designed primarily for vendor services and did not specifically target the end-user. In this work, we add to the landscape of mobile trusted computing and advocate using the dormant hardware capabilities of SIM cards to provide (additional) TPM support on mobile devices. Our systematization of related works shows that a \stpm can take a niche among the existing works and, in particular, inherently avoids problems of TEE-based deployments (e.g., protected state or secure clock) that currently require compromises and modifications to the TPM specification (e.g., "dark period" or cooperative checkpointing of \ftpm) or that make additional hardware requirements (e.g., replay-protected memory blocks). On the other hand, a movable TPM raises the challenge of how to bind the TPM and the platform RTM. In this work, we proposed using the unique features of mobile devices---secure boot and TEE with device-specific, certified keys---to address this challenge. However, we find that this problem also affects prior solutions, like a vTPM based on a PCI-attached secure co-processor, and our solution might give insights into how to establish the TPM-RTM binding in those prior works.

 Our \stpm implementation is based on a physical SIM card, thus it is currently not suitable for phones using eSIM (e.g., Apple iPhone). However, eSIM solutions are supported by separate hardware modules (such as JEDEC SON-8) and it might be worthwhile to investigate how those modules can be extended to implement a full TCG compliant TPM2.0.

Recently, Google introduced their Titan chip~\cite{googleTitan} as part of their Nexus~3 phones, which shows the need for hardware-backed security features in addition to TEE-based implementations on mobile end-user devices. Similar to the \stpm, Titan chip also provides hardware-backed security for system operations like verified booting as well as a hardware-implemented keystore for apps and users. But Titan is exclusive for Google devices, whereas our \stpm is portable between mobile devices and provides TPM2.0 compliant features. Since implementation details are yet unknown, we excluded the Titan chip from our systematization in Section~\ref{sec:reqanalysis}.

\section{Conclusion}
\label{sec:conclusion}

In this paper we proposed \ournameplain, a hardware-based TPM implementation for mobile devices using the SIM card. Performance evaluation of our prototype shows that our implementation is comparable with an existing discrete TPM chip. Thus, we think \stpm is a practical solution to add user-centric trusted computing technology to mobile devices without the need to add hardware. A particular challenge of a movable TPM is the binding between TPM and the device RTM, which we addressed through a TEE-proxy or a distance bounding protocol.
Future work includes a more detailed and formal write-up of the custom DAA scheme we used in our prototype, since it is particularly fitting for implementation on a smart card. Also future implementations of \ournameplain in industrial IoT or automotive settings for hardware based attestation could be worthwhile to pursue.

\section{Acknowledgment}
\label{sec:ack}

We are grateful to N. Asokan in \say{particular} for his insightful suggestions. We are also thankful to the anonymous reviewers for their valuable reviews.

This work is supported by the German Federal Ministry of Education and Research(BMBF) through funding for the Center for IT-Security, Privacy and Accountability (CISPA)(AutSec/FKZ: 16KIS0753) and the CISPA-Stanford Center for Cybersecurity (FKZ: 16KIS0762).

{\normalsize \bibliographystyle{acm}
\bibliography{usenix2019_v2}


\appendix





\end{document}